\documentclass[11.5pt]{scrartcl}

\usepackage{xy-format}
\usepackage{xy-theorem}
\usepackage{xy-math}
\usepackage{fancyhdr}
\usepackage{longtable}

\title{Chiral Spin(7) Sigma Models and Topological Modular Forms}

\author[1,2]{Xingyang Yu}
\affil[1]{FirstPrinciples, Toronto, Ontario, Canada}
\affil[2]{Department of Physics, Virginia Tech, Blacksburg, Virginia 24061, USA}

\date{}

\begin{document}
\maketitle

\begin{abstract}
We construct canonical chiral \(\mathcal N=(0,1)\) sigma models on Spin(7) manifolds. The Spin(7) structure fixes the rank-seven bundle of the left-moving fermions, whose quadratic index matches the tangent representation, so the internal anomalies cancel while the rank difference leaves the gravitational anomaly of one right-moving Majorana--Weyl fermion. A transverse section of the left-moving bundle has a 1D zero locus with induced String structure, and we compute its class in the first torsion group of topological modular forms. The class is nonzero precisely when the Euler characteristic of the target is odd. The standard Joyce action on the torus has no internal finite-group anomaly and can be gauged to define an exact chiral spin QFT. We show that the orbifold Euler characteristic is even throughout the diagonal Joyce family generated by coordinate reflections and half-shifts preserving the Cayley form. By contrast, a free antiholomorphic quotient of the Fermat sextic is a compact torsion-free Spin(7) target with Euler characteristic 1305 and realizes the nonzero class.
\end{abstract}

\newpage
\tableofcontents
\clearpage

\section{Introduction}
\label{sec:introduction}

Compact Spin(7) manifolds realize the last entry of Berger's holonomy classification, and string compactification on them preserves a minimal amount of supersymmetry \cite{Joyce1996Spin7,ShatashviliVafa1994}. Spin(7) holonomy leaves two familiar signatures on the worldsheet. The Cayley form generates exceptional chiral symmetries and the Ising subsector of the Spin(7) sigma model \cite{Howe:1991ic,ShatashviliVafa1994}. The heterotic standard embedding reveals a second structure. Under \(\mathfrak e_8\supset\mathfrak{so}(7)\oplus\mathfrak{so}(9)\), seven gauge fermions couple to the Spin(7) connection while nine complete the \(SO(9)\) sector \cite{Sugiyama:2001qh}. These observations isolate a natural asymmetric system with eight tangent fermions and seven Spin(7) fermions.

An asymmetric fermion system of this kind defines a 2D \(\mathcal N=(0,1)\) quantum field theory. These theories carry a single Majorana--Weyl supercharge, which is the least supersymmetry in the lowest dimension that admits a chiral one. Spin(7) geometry already engineers such theories directly, through D1-branes probing Spin(7) cones obtained from Calabi--Yau fourfolds by antiholomorphic involutions \cite{Franco:2021ixh,Franco:2021vxq}. Stolz and Teichner conjectured that their deformation classes are classified by topological modular forms \cite{StolzTeichner2004,StolzTeichner2011}, and this correspondence now organizes both worldsheet constructions and heterotic anomaly questions \cite{GukovPeiPutrovVafa2018,TachikawaYamashita2021}. The torsion classes of \(\tmf\) form the subtlest part of the correspondence. Characteristic forms and rational genera cannot detect them, and only mod-two indices see them \cite{TachikawaYamashitaYonekura2023,BerwickEvans2023}. The first torsion class appears in degree one. Which compact target geometry selects it?

In this work we answer this question through the seven-fermion system isolated above. We take it as the intrinsic left-moving content of a chiral \(\mathcal N=(0,1)\) sigma model. The Spin(7) structure fixes its Fermi bundle, cancels the internal anomaly against the tangent fermions, and produces a degree-one String zero-locus class. We compute this class in \(\tmf_1\) and show that it is nonzero precisely when \(\chi(X)\) is odd.

Let \(P\to X\) be a fixed Spin(7) reduction. The tangent bundle is associated to the real spin representation \(\Delta_8\),
\begin{equation}
TX=P\times_{\Spin(7)}\Delta_8.
\end{equation}
The same principal bundle also defines the associated bundle
\begin{equation}
V_7=P\times_{\Spin(7)}\rho_7
\cong \Lambda^2_7T^*X,
\label{eq:intro-v7}
\end{equation}
where \(\rho_7\) is the vector representation of \(\Spin(7)\). We couple the right-moving fermions to \(TX\) and the left-moving fermions to \(V_7\). This is the canonical Spin(7) sigma model studied in this paper.

The key representation-theoretic fact is that \(\Delta_8\) and \(\rho_7\) have the same quadratic index. Their associated bundles obey
\begin{equation}
w_1(TX)=w_1(V_7)=0,
\qquad
w_2(TX)=w_2(V_7)=0,
\qquad
\lambda(TX)=\lambda(V_7),
\label{eq:intro-characteristic-equality}
\end{equation}
with \(\lambda=p_1/2\). Equation~\eqref{eq:intro-characteristic-equality} cancels the internal target and gauge anomaly of the chiral fermions. The remaining rank difference is one. In free-field normalization,
\begin{equation}
c_R=8+\frac{8}{2}=12,
\qquad
c_L=8+\frac{7}{2}=\frac{23}{2},
\qquad
\nu=2(c_R-c_L)=1.
\label{eq:intro-degree}
\end{equation}
The model is therefore a spin QFT with gravitational degree \(\nu=1\) and the invertible anomaly of one chiral Majorana fermion. This anomaly is RG invariant, so the model cannot flow as a whole to the ordinary balanced \(\mathcal N=(1,1)\) sigma model.

The gravitational anomaly fixes the grading, but it does not select the element inside that degree. In degree one \cite{TachikawaYamashita2021},
\begin{equation}
\tmf_1\cong\ZZ/2\{\eta\}.
\end{equation}
Here \(\eta\) denotes the stable Hopf element, not the Dedekind eta function. We use \(\tmf\), \(\Tmf\), and \(\TMF\) for the connective, compactified, and periodic theories, respectively.\footnote{The capitalization distinguishes the three spectra and does not distinguish SCFTs from general QFTs.} The associated topological class is either \(0\) or \(\eta\). We determine which class is selected by \(X\) through geometric localization. Choose a transverse section \(s\) of \(V_7\). Its zero locus
\begin{equation}
Z=s^{-1}(0)
\end{equation}
is a closed 1D manifold, and the stable identity
\begin{equation}
TZ\simeq (TX-V_7)|_Z
\label{eq:intro-stable-zero-locus}
\end{equation}
equips it with the String structure induced from the virtual bundle \(TX-V_7\). The String orientation then defines a class in \(\tmf_1\) \cite{AndoHopkinsRezk2010,AndoEtAl2014}. We compute this class by stabilizing the odd-rank Atiyah--Bott--Shapiro symbol and applying graded Clifford Morita equivalence. For the canonical Spin(7) bundle, the calculation reduces to the parity of the twisted Dirac index \(\ind(D_X\otimes TX_{\mathbb C})\). The result is
\begin{equation}
\boxed{
\mu_7(X)=[Z]_{\tmf}
=\bigl(\chi(X)\bmod2\bigr)\eta.
}
\label{eq:intro-main-result}
\end{equation}
Thus the Euler characteristic selects the String zero-locus class. Both values belong to the same degree-one sector, and \(\mu_7(X)=0\) does not remove the gravitational anomaly.

Equation~\eqref{eq:intro-main-result} is not only a zero-mode formula. The real-\(KO\) Euler class localizes the complete normalized Dirac--Ramond series to \(Z\). Since \(TZ-\RR\) is stably trivial, every positive oscillator contribution cancels and
\begin{equation}
\Phi_2(X,V_7;q)
=\bigl(\chi(X)\bmod2\bigr)\eta
\end{equation}
as an entire \(q\)-series. In degree one, the cusp map for \(KO_{\mathrm{MF}}\) is an isomorphism, so the same series determines the full class-level lift.\footnote{This is the normalized real-\(KO\) index rather than the complete torus partition function.}

The free Joyce point provides the most concrete worldsheet test. The eight right-moving fermions transform in \(\Delta_8\), while the seven left-moving fermions transform in \(\rho_7\). We compute the internal finite-spin anomaly for the Joyce group \(\Gamma=(\ZZ_2)^4\) and find that it vanishes. After choosing a trivialization of this internal anomaly, ordinary finite gauging defines a gravitationally anomalous spin orbifold CFT. The gauged model retains the single-Majorana gravitational anomaly in \eqref{eq:intro-degree}. Its unweighted orbifold Euler characteristic is \(144\), matching the even Euler characteristic and vanishing geometric Smith charge of the corresponding smooth Joyce target.

The same Joyce action appears in the genus-one modular-invariant heterotic construction of Sugiyama and Yamaguchi \cite{Sugiyama:2001qh}. Its gauge sector realizes the conformal embedding
\begin{equation}
E_{8,1}\supset \mathfrak{so}(7)_1\oplus\mathfrak{so}(9)_1.
\end{equation}
This embedding does not give a tensor-product decomposition, so the standalone seven-fermion theory cannot be recovered by dividing characters. The ordinary \(\mathcal N=(1,1)\) Joyce orbifold gives a second comparison theory with balanced tangent fermions and no gravitational anomaly.

The standard Joyce sign vectors generate the binary subspace \(C\subset\FF_2^8\) that gives the complete diagonal stabilizer of the Cayley form. We show that every effective elementary abelian half-affine subgroup of
\begin{equation}
C\oplus\FF_2^8
\end{equation}
has even unweighted orbifold Euler characteristic. For full-sign lifts, the only values are
\begin{equation}
48,\qquad 144,\qquad 336,\qquad 720.
\end{equation}
The theorem covers arbitrary pure half-translation kernels and all quarter-period conjugacy classes. Two larger finite computations find the same parity throughout the signed-monomial families they enumerate. The standard Joyce point is therefore one member of a much larger even family.

This evenness is not universal in torsion-free Spin(7) geometry. Let
\begin{equation}
X_{\mathrm F}=\{z_0^6+\cdots+z_5^6=0\}\subset\mathbb{CP}^5.
\end{equation}
Complex conjugation acts freely on \(X_{\mathrm F}\), and the quotient \(Y_{\mathrm F}=X_{\mathrm F}/\ZZ_2\) is a compact torsion-free barely Spin(7) manifold with
\begin{equation}
\pi_1(Y_{\mathrm F})=\ZZ_2,
\qquad
\Hol(Y_{\mathrm F})=SU(4)\rtimes\ZZ_2,
\qquad
\chi(Y_{\mathrm F})=1305.
\end{equation}
It follows that
\begin{equation}
\mu_7(Y_{\mathrm F})=\eta.
\end{equation}
The target itself is not String. The relevant String structure belongs to the virtual pair \(TY_{\mathrm F}-V_7\). The left-moving fermions cancel the obstruction that prevents \(TY_{\mathrm F}\) from carrying the required orientation.

The rest of the paper is organized as follows. Section~\ref{sec:canonical-model} constructs the sigma-model pair and derives its universal anomaly cancellation. Section~\ref{sec:spin7-smith-charge} follows one calculation from the Fermi zero locus to Euler parity, oscillator cancellation, and the degree-one \(KO_{\mathrm{MF}}\) class. Section~\ref{sec:joyce-orbifolds} develops the Joyce worldsheet and proves the evenness of the diagonal family. Section~\ref{sec:fermat-sextic} realizes the nonzero class outside that family. Section~\ref{sec:discussion} collects the remaining field-theoretic and geometric problems.

\section{The canonical Spin(7) sigma model}
\label{sec:canonical-model}

We now construct the canonical chiral \(\mathcal N=(0,1)\) sigma model on a Spin(7) manifold \(X\). We first identify the bundles that carry the eight right-moving fermions and the seven left-moving fermions. We then show that their target-frame anomalies cancel because the two Spin(7) representations have the same quadratic index. Their rank difference leaves the gravitational anomaly of a single right-moving Majorana--Weyl fermion.

\subsection{The chiral fields and the two Spin(7) bundles}
\label{subsec:sigma-model-anomaly}

Let \(\Sigma\) be a spin worldsheet, let \(X\) be an 8D Riemannian manifold, and let \(E\to X\) be a real Euclidean bundle coupled to the left-moving fermions. This chiral field content goes back to the early heterotic \((1,0)\) sigma-model literature, with left and right exchanged in our convention \cite{Hull:1985jv,Bergshoeff:1985gc,deLaOssa:2018iij}. A scalar multiplet consists locally of a coordinate field \(\phi^i\) and a right-moving Majorana--Weyl fermion \(\psi_+^i\) valued in the tangent space at \(\phi\). A Fermi multiplet contributes an independent left-moving Majorana--Weyl fermion \(\lambda_-^a\) valued in \(E_\phi\). Globally, these fields are
\begin{equation}
\phi\colon\Sigma\longrightarrow X,
\qquad
\psi_+\in\Gamma\bigl(S_R\otimes\phi^*TX\bigr),
\qquad
\lambda_-\in\Gamma\bigl(S_L\otimes\phi^*E\bigr).
\label{eq:01-field-content}
\end{equation}
Here \(\phi\) is the sigma-model map from the worldsheet \(\Sigma\) to the target \(X\). The pullback \(\phi^*TX\) places the tangent space \(T_{\phi(\sigma)}X\) over each point \(\sigma\in\Sigma\), while \(\phi^*E\) similarly places the fiber \(E_{\phi(\sigma)}\) there. The bundles \(S_R\) and \(S_L\) are the right-moving and left-moving spinor bundles on the spin worldsheet. The symbol \(\Gamma\) denotes the space of sections over \(\Sigma\). Thus \(\psi_+(\sigma)\) carries a right-moving worldsheet spinor index and a tangent-space index, while \(\lambda_-(\sigma)\) carries a left-moving worldsheet spinor index and an \(E\)-fiber index.

Since \(X\) is 8D, the scalar sector contains eight scalar multiplets, whose field components include eight right-moving fermions. Supersymmetry thus fixes the right-moving bundle to be \(TX\). On the other hand, the number of Fermi multiplets is independent of the number of scalar multiplets, so \(\mathcal N=(0,1)\) supersymmetry does not fix the rank or representation of \(E\) coupled to the left-moving fermions.

Local components of the fermions on overlapping patches of target space $X$ are related by frame transformations, and the bundle records this patching. These frame transformations are redundancies of the description of the sigma model. They are not an additional global Spin(7) symmetry or a dynamical gauge symmetry on the 2D worldsheet QFT.\footnote{After pullback, the connections act like background gauge fields for local frame transformations. This local flavor-space language does not imply a global flavor symmetry or a dynamical 2D gauge field. Heterotic sigma-model literature often calls \(E\) the gauge bundle, while the frame principal bundle from which \(E\) is associated is a distinct object.}

Given metrics and compatible connections, the action takes the schematic form \cite{Sen:1985eb,Bergshoeff:1985gc,deLaOssa:2018iij}
\begin{equation}
\begin{split}
S=\frac{1}{4\pi\alpha'}\int_\Sigma
\Bigl(&g_{ij}\,\partial_+\phi^i\partial_-\phi^j
+i g_{ij}\psi_+^iD_-\psi_+^j
+i h_{ab}\lambda_-^aD_+\lambda_-^b
\\
&+\frac{1}{2}F_{ijab}\psi_+^i\psi_+^j\lambda_-^a\lambda_-^b
\Bigr).
\end{split}
\label{eq:01-sigma-action}
\end{equation}
The four-fermion interaction is fixed by \(\mathcal N=(0,1)\) supersymmetry. We have suppressed the \(B\)-field coupling, whose differential data completes the quantum anomaly trivialization \cite{Hull:1986xn}. The two fermion bundles determine both the target-frame anomaly and the topological charge studied below.

Quantum mechanically, the path integral over a real chiral fermion is the Pfaffian of its chiral kinetic operator rather than a determinant (see, e.g., \cite{Freed:1999vc}). As \(\phi\) and the background connections vary, the Pfaffians of \(\psi_+\) and \(\lambda_-\) are sections of line bundles over the space of fields and backgrounds. Their product defines a quantum amplitude only after the combined anomaly line has been trivialized \cite{Freed:1986zx}. The corresponding characteristic-class obstructions are \cite{Hull:1986xn,GukovPeiPutrov2020,deLaOssa:2018iij}
\begin{equation}
w_1(TX)=w_1(E),
\qquad
w_2(TX)=w_2(E),
\qquad
\frac{1}{2}\bigl[p_1(TX)-p_1(E)\bigr]=0,
\label{eq:01-anomaly-condition}
\end{equation}
where the first two equations concern the first and second Stiefel--Whitney classes \(w_i(TX),w_i(E)\in H^i(X,\mathbb Z_2)\). If both \(TX\) and \(E\) admit spin lifts, then \(w_1(TX)=w_1(E)=0\) and \(w_2(TX)=w_2(E)=0\), so only the Pontryagin-class condition remains. Together, these conditions remove the characteristic-class obstructions to patching the combined real-fermion Pfaffian. A residual torsion anomaly is a separate question, which will be resolved below for the Spin(7) model we construct.

When both bundles $TX$ and $E$ are spin, the last equation becomes \(\lambda(TX)=\lambda(E)\), with \(\lambda=p_1/2\). This cancellation should not be confused with the gravitational anomaly, whose coefficient only depends on the net number of Majorana-Weyl fermions,
\begin{equation}
\nu=2(c_R-c_L)=\dim X-\operatorname{rk}E,
\label{eq:degree-general}
\end{equation}
where the rank of tangent bundle $TX$ is just the dimension of the target manifold $X$. 

We now specialize to a manifold \(X\) with a specified Spin(7) structure \(P\to X\). Under the embedding \(\Spin(7)\subset SO(8)\), the tangent representation is the real spin representation \(\Delta_8\) of \(\Spin(7)\). The associated tangent bundle is
\begin{equation}
TX=P\times_{\Spin(7)}\Delta_8.
\label{eq:tangent-spin-rep}
\end{equation}
The associated-bundle notation says that a change of local Spin(7) frame is accompanied by the inverse action on the fiber coordinate.\footnote{Explicitly, \((p,v)\sim(pg,\Delta_8(g^{-1})v)\). Here \(p\) is a local Spin(7) frame and \(v\) records the vector components in that frame. Transforming both entries describes the same geometric tangent vector, just as changing a basis and transforming the components leaves a vector unchanged.} The resulting equivalence class is a frame-independent tangent vector. The right-moving fermions therefore transform in \(\Delta_8\) because they are superpartners of the target coordinates $\phi$.

The same principal Spin(7) bundle has a second basic real representation, which is the vector representation. Let \(\rho_7\) denote the vector representation of \(\Spin(7)\) and we can define a rank-7 bundle 
\begin{equation}
V_7=P\times_{\Spin(7)}\rho_7.
\label{eq:v7-associated}
\end{equation}
To make this bundle more explicit, recall that any Spin(7) manifold is equipped with a Cayley four-form \(\Phi\), whose stabilizer at each point is \(\Spin(7)\). The form determines the metric and orientation, is self-dual with \(*_8\Phi=\Phi\), and has comass one. When the Spin(7) structure is torsion-free, \(\nabla\Phi=0\) and \(d\Phi=0\). The 4D submanifolds on which \(\Phi\) restricts to the induced volume form are the Cayley submanifolds \cite{Harvey:1982xk,Joyce2002Exceptional}. Now, the bundle of two-forms $\Lambda^2T^*X$ can be decomposed, via the induced endomorphism \(\alpha\mapsto *(\Phi\wedge\alpha)\), into rank-7 and rank-21 eigenspaces,
\begin{equation}
\Lambda^2T^*X
=\Lambda^2_7T^*X\oplus\Lambda^2_{21}T^*X.
\end{equation}
The rank-7 bundle $V_7$ defined above can now be identified as \cite{Karigiannis2007Spin7}
\begin{equation}
V_7\cong\Lambda^2_7T^*X.
\label{eq:v7-two-forms}
\end{equation}

This identification describes the target-dependent internal index of the left-moving fermions. The seven components do not arise from a naive decomposition \(TX\cong\mathbf7\oplus\mathbf1\). Instead, they are independent Fermi indices in \(\Lambda^2_7T^*X\). We remark that The fields \(\lambda_-\) remain 2D fermions and are not target-space two-form gauge fields, they are just carrying indices associated with the differential two-form associated to the Cayley 4-form of the Spin(7) structure.

To summarize, both bundles $TX$ and $V_7$ respectively coupling to the right- and left-moving fermions come from only the Spin(7) structure of the target space manifold, with no independent gauge bundle introduced.\footnote{The last condition in \eqref{eq:01-anomaly-condition}, specialized to \(E=V_7\), has a familiar heterotic form. Its de Rham image underlies the Green--Schwarz-type relation \(dH\propto\operatorname{tr}(R\wedge R)-\operatorname{tr}(F_{V_7}\wedge F_{V_7})\), with overall sign and trace normalization fixed by convention. Its integral refinement requires the quantized \(B\)-field cocycle described below. The bundle \(V_7\) plays the algebraic role of the heterotic gauge bundle, but no critical heterotic completion is assumed here.} Therefore, we call this sigma model the canonical Spin(7) sigma model, where ``canonical'' refers to the uniqueness of the left-moving fermion bundle once the Spin(7) target is specified. Note that if one instead takes left-moving fermion bundle to be \(E=TX\), i.e., same as the right-moving one, the sigma model would become the familiar nonchiral \(\mathcal N=(1,1)\) theory for Spin(7) compactification of Type II strings \cite{ShatashviliVafa1994}. The key point of our construction is that it is a genuine chiral theory while still canceling the target-frame anomaly, which we will discuss in the following subsection.

\subsection{Target-frame anomaly cancellation}
\label{subsec:universal-characteristic-identity}

Compared to its non-chiral $\mathcal{N}=(1,1)$ cousin, whose left- and right-moving fermions are both sections of the tangent bundle $TX$, the chiral $\mathcal{N}=(0,1)$ Spin(7) sigma model constructed above looks potentially problematic because the left- and right-moving fermions transform in representations of different ranks. However, the target-frame anomaly actually cancels because the relevant 2D chiral anomaly depends on quadratic charges rather than the dimensions of the representations.

The chiral anomaly of interest is the same charge-squared mechanism as a 2D Abelian chiral gauge anomaly. Up to a common normalization, a \(U(1)\) background whose charges are carried by worldsheet fermions contributes the following anomaly polynomial \cite{GukovPeiPutrov2020}
\begin{equation}
I_4^{U(1)}
=
\left(
\sum_{\psi_+}q^2-
\sum_{\lambda_-}q^2
\right)
\left(\frac{F}{2\pi}\right)^2.
\label{eq:abelian-chiral-anomaly}
\end{equation}
The anomaly vanishes when the charge-squared sums of the two chiralities agree. The Spin(7) calculation below is the non-Abelian version of this equality.

We expose these charges by restricting \(\Spin(7)\) to its maximal torus \(U(1)^3\). This is the largest subgroup of commuting rotations, so every representation decomposes into components with ordinary Abelian charges. Let \(x_1,x_2,x_3\) denote the corresponding curvature coordinates. A weight
\begin{equation}
w=q_1x_1+q_2x_2+q_3x_3
\end{equation}
records the charge vector \((q_1,q_2,q_3)\). For a real representation, the nonzero weights occur in conjugate pairs \(\{w,-w\}\). Such a pair describes one rotating real plane after complexification, so it contributes \(w^2\) once to the first Pontryagin class.\footnote{For an ordinary rotation of a real plane, \(z=u+iv\) and \(\bar z=u-iv\) transform with charges \(+1\) and \(-1\). They are complex conjugates built from the same two real components, so they describe one real plane rather than two independent fermions.} A zero weight is neutral under the chosen maximal torus and contributes nothing to this calculation.\footnote{It need not define a Spin(7) singlet.}

The vector representation \(\rho_7\), used to construct $V_7$ left-moving fermion bundle, has weights \cite{FegerKephart2012}
\begin{equation}
0,\qquad \pm x_1,\qquad\pm x_2,\qquad\pm x_3.
\label{eq:rho7-weights}
\end{equation}
It thus contains three rotating real planes and one real line fixed by the maximal torus. The full representation remains irreducible under Spin(7). Choosing one weight from each conjugate pair gives
\begin{equation}
p_1(\rho_7)=x_1^2+x_2^2+x_3^2.
\label{eq:p1-rho7}
\end{equation}

The representation \(\Delta_8\), associated to the tangent bundle $TX$, is a spin representation of \(\Spin(7)\). A spinor sees half the rotation angle in each plane, so its eight weights are \cite{FegerKephart2012}
\begin{equation}
\frac{1}{2}(\pm x_1\pm x_2\pm x_3),
\label{eq:delta8-weights}
\end{equation}
with all sign choices. They form four conjugate pairs. Taking one representative from each pair gives
\begin{align}
p_1(\Delta_8)
&=\frac{1}{4}\Bigl[
(x_1+x_2+x_3)^2
+(x_1+x_2-x_3)^2
\nonumber\\
&\hspace{27mm}
+(x_1-x_2+x_3)^2
+(-x_1+x_2+x_3)^2
\Bigr]
\nonumber\\
&=x_1^2+x_2^2+x_3^2.
\label{eq:p1-delta8}
\end{align}

\eqref{eq:p1-delta8} therefore gives the same quadratic anomaly coefficient as \eqref{eq:p1-rho7}. Substitution into the chiral anomaly polynomial gives
\begin{equation}
I_{4,\Spin(7)}
\mathrel{\propto}
p_1(\Delta_8)-p_1(\rho_7)
=0.
\label{eq:local-spin7-anomaly-cancellation}
\end{equation}
Thus the Spin(7)-dependent \emph{local} anomaly cancels exactly. 

We now lift this local equality to an integral statement and then check the possible flat remainder. The integral refinement is detected by the large gauge transformations. In the present case, the associated integral cohomology class reads \cite{GrayGreen1970}
\begin{equation}
H^4(B\Spin(7);\ZZ)\cong\ZZ,
\end{equation}
and both \(\rho_7\) and \(\Delta_8\) have Dynkin index one in the same integral normalization \cite{FegerKephart2012}. Since this cohomology group has no torsion, the common index fixes the full degree-four class rather than only its rational image. Let \(u\) denotes the primitive generator, then
\begin{equation}
\lambda(\rho_7)=\lambda(\Delta_8)=u,
\qquad
p_1(\rho_7)=p_1(\Delta_8)=2u.
\label{eq:integral-spin7-generator}
\end{equation}
Furthermore, both representations can be lifted to the corresponding spin groups, so their universal first and second Stiefel--Whitney classes vanish. We thus obtain
\(w_1(\Delta_8)=w_1(\rho_7)=0\) and \(w_2(\Delta_8)=w_2(\rho_7)=0\) for \eqref{eq:01-anomaly-condition}. Together with \eqref{eq:integral-spin7-generator}, these identities promote the maximal-torus calculation to integral target-frame anomaly cancellation.

\paragraph{The $V_7$-twisted String structure and global anomaly cancellation.}
\label{subsec:virtual-string-structure}

The equality of characteristic classes in \eqref{eq:integral-spin7-generator} pulls back along \(X\to B\Spin(7)\) to \(\lambda(TX)-\lambda(V_7)=0\). This vanishing difference guarantees that an anomaly trivialization exists, but it does not select one. Choosing such a trivialization equips a String structure to the following virtual bundle \cite{TachikawaYamashita2021},
\begin{equation}
\Xi_7=TX-V_7,
\label{eq:relative-virtual-bundle}
\end{equation}
where the ``virtual'' means it is not generally an actual line bundle,\footnote{A complex line bundle with vanishing first Chern class admits a nowhere-vanishing section, but the vanishing class does not choose that section. Likewise, \(\lambda(TX)-\lambda(V_7)=0\) guarantees that the nullhomotopy exists but does not choose it.} but just a rank-one class in \(KO(X)\). Intuitively, this virtual bundle records the net chiral fermion bundle. A String structure on \(\Xi_7\) is a chosen nullhomotopy\footnote{Represent \(\lambda(TX)-\lambda(V_7)\) by a map
\(f_\lambda\colon X\to K(\mathbb Z,4)\). A nullhomotopy is a homotopy
from \(f_\lambda\) to the constant map. The vanishing of the
cohomology class guarantees that such a trivialization exists but
does not choose one. Homotopy classes of these choices form an
\(H^3(X;\mathbb Z)\) torsor. This is the homotopy-theoretic meaning
of the trivialization entering a twisted String structure
\cite{Sati:2009ic,TachikawaYamashita2021}.}
\begin{equation}
\alpha\colon
\lambda(\Xi_7)=\lambda(TX)-\lambda(V_7)
\simeq0.
\label{eq:virtual-string-trivialization}
\end{equation}
Equivalently, \(\alpha\) is a \(V_7\)-twisted String structure on \(TX\), whose twist class is \(\lambda(V_7)\). Displaying the difference in \eqref{eq:virtual-string-trivialization} fixes the sign convention for the twist.

At the universal topological level, the homotopy class of \(\alpha\) is unique because \cite{GrayGreen1970}
\begin{equation}
H^3(B\Spin(7);\ZZ)=0.
\end{equation}
We use its pullback to \(X\) as the universal String structure on \(\Xi_7\). Other topological String structures on a fixed \(X\) form an \(H^3(X;\ZZ)\) torsor. A differential refinement additionally requires the connections and a \(B\)-field cocycle. These are choices of the quantum theory beyond the universal bundle identity.

The characteristic classes do not by themselves exclude all possible global anomalies,\footnote{A flat anomaly line bundle can still have holonomy around a loop of backgrounds, just as a flat \(U(1)\) connection can have a nontrivial Wilson line operator.} which might be torsional. To isolate the Spin(7)-dependent part, subtract the pure gravitational rank and consider
\begin{equation}
\Delta_8-\rho_7-\mathbf1
\end{equation}
as a virtual representation of rank and quadratic index zero. Potential torsion anomalies are detected by the spin bordism \cite{Freed:2016rqq}, which in this case reads
\begin{equation}
\Omega_3^{\mathrm{Spin}}(B\Spin(7))=0.
\label{eq:spin7-bordism-vanishing}
\end{equation}
Here the superscript \(\mathrm{Spin}\) specifies the tangential structure of the 3D bulk manifold used for anomaly inflow of the 2D sigma model, while \(B\Spin(7)\) records its background principal \(\Spin(7)\) bundle.\footnote{The role of tangential Spin structure is familiar from the 2D Arf theory, whose partition function depends on the spin structure. The relevant Atiyah--Hirzebruch spectral sequence is \(E^2_{p,q}=H_p(B\Spin(7);\Omega_q^{\mathrm{Spin}})\Rightarrow\Omega_{p+q}^{\mathrm{Spin}}(B\Spin(7))\) \cite{Stong1968}. Since \(B\Spin(7)\) is 3-connected and \(\Omega_3^{\mathrm{Spin}}(\mathrm{pt})=0\), it has no nonzero term of total degree three.} Equation~\eqref{eq:spin7-bordism-vanishing} rules out a residual torsion Spin(7) anomaly after the local and integral classes have cancelled. 

Combining the local \eqref{eq:local-spin7-anomaly-cancellation}, integral \eqref{eq:integral-spin7-generator}, and bordism \eqref{eq:spin7-bordism-vanishing} results shows that the canonical chiral $\mathcal{N}=(0,1)$ sigma model has no Spin(7)-dependent target-frame anomaly. 

\paragraph{The remaining gravitational anomaly.}
\label{subsec:model-remains-chiral}

The target-frame anomaly depends on the weights, but the gravitational anomaly counts net chiral degrees of freedom. The free-field count at large volume gives equal left-moving and right-moving contributions from the eight bosons. The eight right-moving Majorana--Weyl fermions contribute four units to \(c_R\), while the seven left-moving Majorana--Weyl fermions contribute \(7/2\) units to \(c_L\) \cite{GukovPeiPutrov2020}. Hence
\begin{equation}
c_R=12,
\qquad
c_L=\frac{23}{2},
\qquad
c_R-c_L=\frac{1}{2}.
\end{equation}
Thus we are left with a degree-one gravitational anomaly 
\begin{equation}
    \nu=2(c_R-c_L)=1.
\end{equation} The partition function transforms as a section of its gravitational anomaly line bundle rather than as a modular-invariant scalar. Coupling to the inverse 3D invertible spin theory makes the combined system invariant \cite{Freed:1986zx,Freed:2016rqq}. One could also instead add a free left-moving Fermi multiplet to cancel the degree directly in 2D.\footnote{A free left-moving \(\mathcal N=(0,1)\) Fermi multiplet contributes \(c_L=1/2\) and is trivial under the target Spin(7) frame action. It cancels the gravitational degree without changing the target-frame anomaly. Note that this is just a minimal 2D anomaly cancellation but not by itself a critical heterotic string theory completion.}

\section{The Spin(7) charge and its modular refinement}
\label{sec:spin7-smith-charge}

In this section, we determine the degree-one charge left open by the anomaly analysis. Section~\ref{sec:canonical-model} gives \(TX-V_7\) a String structure and fixes the gravitational degree to \(\nu=1\). The degree alone does not decide whether the geometric class constructed below is zero or the nonzero element \(\eta\) in \(\tmf_1\cong\ZZ/2\{\eta\}\). We answer this question by turning the virtual bundle \(TX-V_7\) into the tangent bundle of an actual 1D manifold.

Choose a transverse section of \(V_7\) and let \(Z\) be its zero locus. The normal bundle of \(Z\) is \(V_7|_Z\), which gives
\begin{equation}
(TX-V_7)|_Z\simeq TZ.
\label{eq:section-opening-stable-tangent}
\end{equation}
The String structure on \(TX-V_7\) therefore becomes an ordinary String structure on \(Z\). Its class defines the geometric Spin(7) charge \(\mu_7(X)\). We derive
\begin{equation}
\boxed{
\mu_7(X)=\bigl(\chi(X)\bmod2\bigr)\eta.
}
\label{eq:smith-main}
\end{equation}
The derivation quantizes the seven Fermi zero modes into a rank-eight spinor state bundle and converts the String zero locus into an ordinary twisted Dirac index on \(X\). Spin(7) representation theory identifies this state bundle with \(TX\), and the index theorem reduces its parity to \(\chi(X)\). The same zero locus controls the complete normalized Dirac--Ramond series. Its positive-energy oscillators cancel in the index, and the degree-one cusp isomorphism gives the unique corresponding \(KO_{\mathrm{MF}}\) class.

\subsection{The Fermi zero locus and the odd-rank ABS class}
\label{subsec:string-zero-locus}

Let \(X\) be a closed 8D Spin(7) manifold with a specified Spin(7) structure, and equip
\begin{equation}
TX-V_7
\end{equation}
with the pulled-back universal String structure described in \cref{subsec:virtual-string-structure}. Choose a section
\begin{equation}
s\in\Gamma(X,V_7)
\end{equation}
transverse to the zero section. Its zero locus
\begin{equation}
Z=s^{-1}(0)
\end{equation}
is a closed 1D manifold. Here \(s\) is an auxiliary bosonic section of the Fermi bundle that represents its Euler and Smith classes.\footnote{It is not the worldsheet fermion \(\lambda_-\), and the construction does not require the interacting sigma model to select a physical defect \(Z\).}

Locally, \(s\) is a collection of seven functions. Transversality means that their differentials are independent along \(Z\). More precisely, for every \(x\in Z\), the derivative
\begin{equation}
ds_x\colon T_xX\longrightarrow (V_7)_x
\end{equation}
is surjective and has kernel \(T_xZ\). It gives the exact sequence
\begin{equation}
0\longrightarrow TZ
\longrightarrow TX|_Z
\xrightarrow{\ ds\ }
V_7|_Z
\longrightarrow0.
\label{eq:zero-locus-normal-sequence}
\end{equation}
Thus \(\dim Z=8-7=1\) and
\begin{equation}
N_{Z/X}\cong V_7|_Z.
\end{equation}
This identifies the restriction of the virtual bundle \(TX-V_7\) with the genuine tangent bundle \(TZ\),
\begin{equation}
TZ\simeq (TX-V_7)|_Z.
\label{eq:zero-locus-stable-tangent}
\end{equation}
The String trivialization of \(TX-V_7\) restricts through \eqref{eq:zero-locus-stable-tangent} and gives \(Z\) a String structure.\footnote{The restriction \(TX|_Z\) still has rank eight because only its base points have been restricted. The quotient by \(TZ\) gives the seven normal directions. Also, \(s|_Z=0\) does not trivialize \(V_7|_Z\). This bundle cancels from the stable tangent class because it is the normal bundle of \(Z\).}

Two transverse sections determine the same String bordism class. Let \(s_t\) be a generic transverse homotopy between \(s_0\) and \(s_1\). Its zero locus
\begin{equation}
W=\{(x,t)\in X\times[0,1]\mid s_t(x)=0\}
\end{equation}
has dimension \(9-7=2\) and inherits a String structure by the same stable tangent argument. Its boundary is
\begin{equation}
\partial W=Z_1\sqcup(-Z_0),
\qquad
Z_i=s_i^{-1}(0).
\end{equation}
Thus \(Z_0\) and \(Z_1\) are String bordant even when their components and embeddings differ. We can thus define
\begin{equation}
\mu_7(X)
:=\sigma_{\mathrm{AHR}}[Z]
\in\pi_1\tmf
\cong\ZZ/2\{\eta\},
\label{eq:def-mu7}
\end{equation}
where \(\sigma_{\mathrm{AHR}}\colon M\mathrm{String}\to\tmf\) is the String orientation \cite{AndoHopkinsRezk2010,AndoEtAl2014}.

The group \(\tmf_1=\pi_1(\tmf)\) has two elements. The nonzero element \(\eta\) is represented by the nonbounding circle with periodic spin structure (physically the Ramond condition) and is detected by the mod-two index of one real Majorana zero mode.\footnote{We hope the reader does not confuse this \(\eta\) with the Dedekind eta function used below.} The bounding circle with antiperiodic spin structure represents zero.

\begin{proposition}[Spin(7) String zero locus]
\label{prop:spin7-zero-locus}
The class \(\mu_7(X)\) is independent of the transverse section. It depends on the specified Spin(7) structure and its pulled-back universal String structure on \(TX-V_7\).
\end{proposition}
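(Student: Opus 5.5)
The plan is to make the bordism argument sketched just before the statement precise, and then push it through the String orientation. First, I would show that the String bordism class \([Z,\alpha|_Z]\in\pi_1 M\mathrm{String}\) depends only on \((X,P,\alpha)\). Fix the Spin(7) structure \(P\), and with it \(V_7\), \(TX\), and the pulled-back universal trivialization \(\alpha\) of \(\lambda(TX)-\lambda(V_7)\). Given two transverse sections \(s_0,s_1\), the space \(\Gamma(X,V_7)\) is contractible, so they are joined by the linear path \(s_t\). By the parametric transversality theorem, this path can be perturbed rel endpoints so that \(S(x,t)=s_t(x)\), viewed as a section of \(\mathrm{pr}_X^*V_7\) over \(X\times[0,1]\), is transverse to zero. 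Its zero set \(W\) is then a compact 2-manifold with \(\partial W=Z_1\sqcup(-Z_0)\).

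The second step is the structure-carrying part. The normal sequence for \(W\subset X\times[0,1]\) gives
\[TW\oplus\mathrm{pr}_X^*V_7|_W\cong (TX\oplus\RR)|_W.\]
Hence \(TW-\RR\simeq \mathrm{pr}_X^*(TX-V_7)|_W\) stably. The constant-in-\(t\) pullback \(\mathrm{pr}_X^*\alpha\) therefore equips \(W\) with a String structure. Its restriction to each end agrees with the String structure on \(Z_i\) obtained from \eqref{eq:zero-locus-stable-tangent}, once one uses the outward-normal convention that identifies \(TW|_{\partial W}\cong T\partial W\oplus\RR\).

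I expect this compatibility of boundary conventions to be the main obstacle. One must check that the stable isomorphisms coming from the two exact sequences at \(t=0,1\) agree with the restriction of the one along \(W\) up to homotopy. Only then do the spin structures (periodic versus antiperiodic on each circle component) and the String trivializations match, rather than differ by a sign or by an element of \(H^3\). I would handle it by writing the splitting \(TX|_Z\cong TZ\oplus V_7|_Z\) through a metric, and noting that it varies continuously in \(t\) along \(W\). This reduces the matching to the standard orientation convention for bordisms.

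Finally, applying \(\sigma_{\mathrm{AHR}}\colon\pi_1M\mathrm{String}\to\tmf_1\) gives \(\mu_7\) independent of \(s\). For the dependence statement, note that the only inputs are \(P\), which fixes \(V_7\) and its isomorphism type, and the trivialization \(\alpha\). A different \(\alpha\), differing by a class in \(H^3(X;\ZZ)\), twists the String structure on \(Z\) by the restricted class, and a different \(P\) changes \(V_7\). So the class is defined only relative to these data, which is exactly what the statement claims. Deformations of \(P\), or of the compatible connections, within a homotopy class would be handled by the same cylinder argument applied to a family of bundles.
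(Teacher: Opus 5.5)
Your proposal is correct and follows the same route as the paper: a generic transverse homotopy \(s_t\) whose 2D zero locus \(W\subset X\times[0,1]\) inherits a String structure from the stable identification \(TW-\RR\simeq(TX-V_7)|_W\), giving a String bordism with \(\partial W=Z_1\sqcup(-Z_0)\) to which \(\sigma_{\mathrm{AHR}}\) is then applied. You supply details the paper leaves implicit (relative transversality rel endpoints, the explicit normal sequence, and boundary-collar compatibility), and they are sound; note only that a change of \(\alpha\) by a class in \(H^3(X;\ZZ)\) restricts to \(H^3(Z;\ZZ)=0\), so in degree one the stated dependence on \(\alpha\) is purely formal.
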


The zero locus is the geometric representative of the \(KO\) Euler class of \(V_7\).\footnote{The ordinary construction may be familiar. For an oriented rank-\(r\) bundle \(E\to X\), the Euler class \(e(E)\in H^r(X,\mathbb Z)\) is Poincaré dual to the zero locus of a transverse section. When \(E=TX\), one obtains \(\int_Xe(TX)=\chi(X)\). The \(KO\) construction used here refines the same mechanism by retaining the spin and Clifford information carried by the zero locus.} In the index, this Euler class saturates all seven left-moving zero modes. The dimension of \(Z\) records the gravitational degree \(\nu=1\), while its String bordism class determines whether the element of \(\tmf_1\) is \(0\) or \(\eta\).

\paragraph{The odd-rank ABS class.}
\label{subsec:odd-rank-abs}

The symbol \([Z]\) in \eqref{eq:def-mu7} denotes the String bordism class of the 1D zero locus. Its image under \(\sigma_{\mathrm{AHR}}\) is \(\mu_7(X)\), so identifying this class decides whether the \(\tmf\) charge is \(0\) or \(\eta\). To compute \([Z]\), we use the real \(KO\) Euler class of a rank-seven spin bundle.

Physically speaking, this amounts to quantizing the seven left-moving fermion zero modes. At a fixed point \(x\in X\), these zero modes are operators that generate the Clifford algebra \(Cl(E_x)\). They act on the ungraded real rank-eight spinor module \(S(E)_x\). The odd number of generators gives no intrinsic fermion-parity grading on this module. As \(x\) varies, the modules form a rank-eight state bundle
\begin{equation}
S(E)\longrightarrow X.
\end{equation}
The zero modes are the Clifford operators, while \(S(E)\) is the finite-dimensional state space on which they act.\footnote{Its rank is the dimension of this state space and does not introduce an eighth Fermi multiplet.} It is also not the full sigma-model Hilbert space, which includes the nonzero oscillators considered below.

The Atiyah--Bott--Shapiro construction \cite{AtiyahBottShapiro1964} represents the zero-locus charge by an operator family that is invertible away from the zero section. Let \(E\to X\) be a real spin bundle of rank seven. Odd-dimensional spinors do not split into \(S^+\) and \(S^-\), so we first add an auxiliary trivial line and set
\begin{equation}
W=E\oplus\RR.
\end{equation}
This stabilization supplies the chiral spinor grading required by the even-rank ABS construction. It does not add a physical target direction or a new Fermi multiplet. The Atiyah--Bott--Shapiro Thom class of the rank-eight bundle \(W\) is represented by Clifford multiplication
\begin{equation}
c(v)\colon\pi^*S_W^+\longrightarrow\pi^*S_W^-
\label{eq:rank8-abs-symbol}
\end{equation}
over the total space of \(W\) \cite{Atiyah1968}. With the positive Clifford convention of Appendix~\ref{app:conventions}, the full operator
\begin{equation}
T(v)=\sum_{a=1}^8v^a\Gamma_a
\end{equation}
obeys
\begin{equation}
\begin{aligned}
T(v)^2
&=
\frac12\sum_{a,b}v^av^b\{\Gamma_a,\Gamma_b\}
=|v|^2\operatorname{id},\\
T(v)^{-1}
&=
\frac{T(v)}{|v|^2},
\qquad v\neq0.
\end{aligned}
\label{eq:abs-clifford-invertibility}
\end{equation}
Thus the ABS operator is invertible away from \(v=0\), and its \(KO\) class is supported on the zero section.\footnote{The physical model for this construction is Witten's realization of a Type I D-string as a tachyon soliton on eight D9-branes and eight anti-D9-branes \cite{Witten:1998cd}. The two real chiral spinor bundles \(S_W^+\) and \(S_W^-\) supply the Chan--Paton bundles, while \(c(v)=v^a\Gamma_a\) is the open-string tachyon profile in the eight directions transverse to the string. Away from \(v=0\), the tachyon is invertible and the branes and antibranes condense to the vacuum. At \(v=0\), the tachyon vanishes and leaves a D-string localized on the zero locus. The compactly supported \(KO\) class of this tachyon configuration is its conserved Type I D-brane charge.}

We now restrict \eqref{eq:rank8-abs-symbol} to the auxiliary real line with coordinate \(t\). Under \(\Spin(7)\subset\Spin(8)\), both half-spinor bundles restrict to the real rank-eight spinor bundle \(S(E)\), and the symbol becomes
\begin{equation}
t\,\mathrm{id}_{S(E)}.
\label{eq:abs-auxiliary-line}
\end{equation}
The relevant Clifford algebras are related by a graded Morita equivalence \cite[Appendix~B]{DebrayEtAl2026Smith},
\begin{equation}
Cl_8\widehat\otimes Cl_{-1}
\sim_{\mathrm{Morita}}
Cl_7.
\label{eq:graded-morita}
\end{equation}
This is an equivalence of graded module categories, not an isomorphism of
algebras. It converts the stabilized symbol into the odd-rank Euler class.

\begin{proposition}[Odd-rank ABS formula]
\label{prop:odd-rank-abs}
For a rank-seven spin bundle \(E\to X\), the Euler class \(e_{KO}(E)\in KO^7(X)\) is
\begin{equation}
e_{KO}(E)
=\eta\,\beta^{-1}\bigl([S(E)]-8\bigr),
\label{eq:odd-rank-ko-euler}
\end{equation}
with the Clifford degree understood through \eqref{eq:graded-morita}. If a transverse section has zero locus \(Z\subset X^8\), then
\begin{equation}
\operatorname{ABS}[Z]
=\Bigl(\ind_{\mathbb C}\bigl(D_X\otimes S(E)_{\mathbb C}\bigr)\bmod2\Bigr)\eta
\in KO_1.
\label{eq:abs-twisted-index}
\end{equation}
\end{proposition}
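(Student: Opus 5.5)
The proof has two parts: the formula for \(e_{KO}(E)\), and the index formula \eqref{eq:abs-twisted-index} obtained by pushing that class forward along \(X\).

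\textbf{Euler class formula.} Begin with the even-rank ABS Thom class of \(W=E\oplus\RR\). It is represented by the symbol \eqref{eq:rank8-abs-symbol} in \(KO^8_c(W)\), and its pullback along the zero section is \(e_{KO}(W)=[S_W^+]-[S_W^-]\in KO^8(X)\). This class vanishes, because \(W\) has the nowhere-zero section \((0,1)\). The odd-rank class therefore has to be read off from the symbol before it is pulled back.

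Write the Thom space of \(W\) as the Thom space of \(E\) suspended once in the \(t\)-direction. The Thom class of \(W\) is then the suspension of the Thom class of \(E\) in \(KO^7\). On the auxiliary line, the restricted symbol is \(t\,\mathrm{id}_{S(E)}\) from \eqref{eq:abs-auxiliary-line}. Under the graded Morita equivalence \eqref{eq:graded-morita}, a graded \(Cl_8\widehat\otimes Cl_{-1}\)-module that carries one extra odd self-adjoint operator corresponds to an ungraded \(Cl_7\)-module, namely \(S(E)\). I would construct the desuspended class as the odd Clifford-linear family \(v\mapsto c(v)\) acting on \(\pi^*S(E)\).

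Pulling back along the zero section gives an element of \(KO^7(X)\). Using Bott periodicity \(KO^7\cong KO^{-1}\), which is the factor \(\beta^{-1}\), together with the standard identification of \(\widetilde{KO}^{-1}\) of a point-type class with \(\eta\) times the ungraded \(Cl_7\)-module minus its trivial rank, this element is \(\eta\,\beta^{-1}([S(E)]-8)\). The subtraction of \(8\) comes from normalizing so that the class restricts to zero at a point where \(E\) is trivial. That matches the requirement that a nowhere-zero section kills the Euler class.

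\textbf{Main obstacle.} I expect the hardest step to be fixing signs and degree conventions. The issue is whether \(\eta\cdot[S(E)]\) carries the correct multiple, which only matters mod 2, and which module is called \(S(E)\). I would check this first on a single point neighborhood, i.e.\ with \(E\) trivial on a disk around a transverse zero, where the local Thom class of \(\RR^7\) must give the generator. Afterwards, reduce mod 2, since \(\eta\) has order two, so sign ambiguities drop out.

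\textbf{Index formula.} Let \(i\colon Z\to X\). By the Thom isomorphism, \(\operatorname{ABS}[Z]=p_!\,i_!1=p_!\,e_{KO}(E)\), where \(p\colon X\to\mathrm{pt}\) is the KO pushforward. This uses the spin structure on \(X\), which exists because \(X\) is Spin(7), and the identification \(N_{Z/X}\cong E|_Z\) from \eqref{eq:zero-locus-normal-sequence}.

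Applying \(p_!\) to \eqref{eq:odd-rank-ko-euler} gives \(\eta\cdot p_!([S(E)]-8)\). Here \(p_!\) on \(KO^0(X)\), after the degree shift, lands in \(KO_8\cong\ZZ\), generated by \(\beta\). Its value is the real index of \(D_X\otimes S(E)\), and in dimension 8 this equals \(\ind_{\mathbb C}(D_X\otimes S(E)_{\mathbb C})\). The degree follows from \(8\equiv0\bmod 8\): real and complex indices agree, with no factor of two.

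The \(-8\) term contributes \(8\,\hat A(X)\). Multiplying by \(\eta\) kills this even integer, so only the parity of the twisted index survives. This yields \eqref{eq:abs-twisted-index}.
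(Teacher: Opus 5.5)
Your index half is the paper's argument: $\operatorname{ABS}[Z]=p_{Z!}(1)=p^{KO}_!\,i_!(1)=p^{KO}_!\,e_{KO}(E)$, the degree-eight real index equals $\ind_{\mathbb C}(D_X\otimes S(E)_{\mathbb C})$, and $8\widehat A(X)\eta=0$. The Euler-class half also follows the paper's outline. You stabilize to $W=E\oplus\RR$, use $\operatorname{Th}(E\oplus\RR)\cong\Sigma\operatorname{Th}(E)$, and restrict the ABS symbol to the auxiliary line to get $t\,\mathrm{id}_{S(E)}$. But the one step with real content is asserted, not proved. Your ``standard identification of $\widetilde{KO}^{-1}$ of a point-type class with $\eta$ times the ungraded $Cl_7$-module minus its trivial rank'' is essentially the formula you are trying to prove.

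Nothing in your argument decides whether the coefficient of $\eta\,\beta^{-1}[S(E)]$ is $1$ or $0$ mod $2$, i.e.\ whether $e_{KO}(E)$ can be nonzero at all. The check you propose cannot decide it either: over a trivializing disk, $e_{KO}(E)$ and $\eta\,\beta^{-1}([S(E)]-8)$ both vanish whatever that coefficient is. The $-8$ is also not a normalization you may impose; it is harmless only because $8\eta=0$. Building $U_E$ directly as the odd family $c(v)$ on $\pi^*S(E)$ and pulling back does not help. The pulled-back operator is zero, so the whole question just moves into the unexplained Morita step.

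The missing idea is a clutching computation on the auxiliary line, combined with Thom multiplicativity. Let $\delta_t=[\RR\xrightarrow{\,t\,}\RR]\in KO^0_c(\RR)\cong\widetilde{KO}^0(S^1)\cong\ZZ/2$. After compactifying, the symbol has normalized values $-1$ and $+1$ at the two ends. Its clutching class is therefore the M\"obius line minus the trivial line, which is the nonzero element. Hence $\delta_t=\eta\,u_1$, where $u_1\in KO^1_c(\RR)$ is the Thom class of the line.

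Now compute $j^*U_W$ for $j(x,t)=(0_x,t)$ in two ways. Thom multiplicativity $U_W=U_E\boxtimes u_1$ gives $j^*U_W=e_{KO}(E)\boxtimes u_1$. The restricted symbol gives $j^*U_W=\beta^{-1}[S(E)]\boxtimes\delta_t=\eta\,\beta^{-1}[S(E)]\boxtimes u_1$. Since $x\mapsto x\boxtimes u_1$ is the suspension isomorphism, $e_{KO}(E)=\eta\,\beta^{-1}[S(E)]$, and the reduced form follows from $8\eta=0$. This is the paper's argument, and it never needs an explicit representative of the rank-seven Thom class.
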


The subtraction by \(8\) in \eqref{eq:odd-rank-ko-euler} removes the trivial rank of the zero-mode state bundle and retains its twisting over \(X\). If \(E\) is trivial, then \(S(E)\cong\RR^8\) and \([S(E)]-8=0\), as required because a trivial bundle has a nowhere-zero section. The factor \(\beta^{-1}\) implements the Bott degree shift, while \(\eta\) records the remaining odd Clifford degree.

\begin{proof}
Let \(u_W\in KO_c^8(W)\) be the rank-eight ABS Thom class. The Thom-space
identification
\begin{equation}
\operatorname{Th}(E\oplus\RR)
\cong
\Sigma\operatorname{Th}(E)
\end{equation}
allows the rank-eight class \(u_W\) to determine the rank-seven Thom class \cite[Theorem~6.1]{Atiyah1968}. On the suspension
coordinate, the ABS symbol is \(t\,\mathrm{id}_{S(E)}\). As spelled out in
Appendix~\ref{app:suspension-eta}, in compact support \(KO\) the 1D symbol
\(\delta_t\) contributes \(\eta\), while the Bott regrading contributes \(\beta^{-1}\),
and the remaining twisting bundle is \(S(E)\). Applying the graded Morita functor therefore gives
\begin{equation}
e_{KO}(E)=\eta\,\beta^{-1}[S(E)].
\end{equation}
Writing the answer as a reduced class gives
\eqref{eq:odd-rank-ko-euler}, since \(8\eta=0\). In particular, the trivial
rank-seven bundle has zero Euler class, as required by its nowhere-zero
section.

The Euler--Gysin identity identifies the pushforward of this class with the
pushforward from the transverse zero locus. For a closed 8D spin manifold,
\begin{align}
p^{KO}_!e_{KO}(E)
&=
\eta\left(
\ind_{\mathbb C}(D_X\otimes S(E)_{\mathbb C})
-8\widehat A(X)
\right)
\nonumber\\
&=
\left(
\ind_{\mathbb C}(D_X\otimes S(E)_{\mathbb C})\bmod2
\right)\eta.
\end{align}
Here complexification in degree eight has multiplicity one, and the second
term vanishes because \(8\eta=0\). The left-hand side is
\(p^{KO}_{Z!}(1)=\operatorname{ABS}[Z]\), which proves
\eqref{eq:abs-twisted-index}.
\end{proof}

\subsection{The Spin(7) twisted index and Euler parity}
\label{subsec:spin7-twisted-index}

The Spin(7) representation now turns the zero-mode state bundle into a familiar geometric bundle. For \(E=V_7\), the real spinor representation of \(\rho_7\) is \(\Delta_8\). Both \(S(V_7)\) and \(TX\) are therefore associated to the same principal Spin(7) bundle through the same representation,
\begin{equation}
S(V_7)
=P\times_{\Spin(7)}S(\rho_7)
\cong
P\times_{\Spin(7)}\Delta_8
=TX.
\label{eq:spinor-v7-associated}
\end{equation}
Thus
\begin{equation}
S(V_7)\cong TX
\label{eq:spinor-v7-tangent}
\end{equation}
as real Spin(7) bundles. The two bundles have the same Spin(7) action and the same transition functions. Physically, quantizing the seven left-moving Fermi zero modes produces the tangent-representation state bundle.

Proposition~\ref{prop:odd-rank-abs} now reduces the Smith charge to the parity of one integer,
\begin{equation}
I_{TX}:=\ind_{\mathbb C}\bigl(D_X\otimes TX_{\mathbb C}\bigr).
\label{eq:def-tangent-twisted-index}
\end{equation}
The Atiyah--Singer theorem \cite{Atiyah:1968mp} gives
\begin{align}
I_{TX}
&=\left\langle \widehat A(TX)\operatorname{ch}(TX_{\mathbb C}),[X]\right\rangle
\nonumber\\
&=\left\langle
\frac{37p_1^2-124p_2}{720},[X]
\right\rangle.
\label{eq:tangent-twisted-index-characteristic}
\end{align}

As reviewed in \cite{Joyce2002Exceptional}, the Euler and Pontryagin classes of a rank-eight bundle with Spin(7) structure obey
\begin{equation}
4p_2-p_1^2=8e.
\label{eq:spin7-characteristic-relation}
\end{equation}
Combining this with
\begin{equation}
\widehat A(X)
=\left\langle\frac{7p_1^2-4p_2}{5760},[X]\right\rangle
\end{equation}
and the Hirzebruch signature formula gives two useful forms of the same index,
\begin{equation}
I_{TX}
=8\widehat A(X)-\frac{\chi(X)}{3}
=24\widehat A(X)-\Sign(X).
\label{eq:index-ahat-euler-signature}
\end{equation}
The second expression immediately determines its parity. Write the middle Betti number as
\begin{equation}
b_4=b_4^++b_4^-,
\qquad
\Sign(X)=b_4^+-b_4^-.
\end{equation}
It follows that \(\Sign(X)\equiv b_4(X)\pmod2\). Poincaré duality on a closed oriented 8D manifold gives
\begin{equation}
\chi(X)
=2\bigl(b_0-b_1+b_2-b_3\bigr)+b_4,
\end{equation}
and hence
\begin{equation}
\Sign(X)\equiv b_4(X)\equiv\chi(X)\pmod2.
\end{equation}
Since \(24\widehat A(X)\) is even,
\begin{equation}
I_{TX}\equiv\chi(X)\pmod2.
\label{eq:index-euler-parity}
\end{equation}

We can now evaluate the String class itself. In degree one, the maps
\begin{equation}
\Omega_1^{\mathrm{String}}
\longrightarrow
\Omega_1^{\mathrm{Spin}}
\longrightarrow
KO_1
\label{eq:degree1-bordism-ko}
\end{equation}
are isomorphisms, and the String orientation sends the nonbounding circle to \(\eta\in\tmf_1\).

\begin{theorem}[Spin(7) Smith formula]
\label{thm:spin7-smith-formula}
For every closed 8D Spin(7) manifold with a specified Spin(7) structure and the pulled-back universal String structure on \(TX-V_7\),
\begin{equation}
\boxed{
\mu_7(X)
=\bigl(\chi(X)\bmod2\bigr)\eta
\in\tmf_1.
}
\end{equation}
\end{theorem}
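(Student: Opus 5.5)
The plan is to chain together the results already assembled in this section, passing from the \(\tmf\) class to the \(KO\) class, then to a twisted index, then to Euler parity.

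\emph{Step 1: reduce to \(KO\).} By \eqref{eq:def-mu7}, \(\mu_7(X)=\sigma_{\mathrm{AHR}}[Z]\), where \([Z]\in\Omega_1^{\mathrm{String}}\). Here \(Z\) is the zero locus of a transverse section of \(V_7\), carrying the String structure restricted through \eqref{eq:zero-locus-stable-tangent}. Proposition~\ref{prop:spin7-zero-locus} makes this class independent of the section. By \eqref{eq:degree1-bordism-ko}, the maps \(\Omega_1^{\mathrm{String}}\to\Omega_1^{\mathrm{Spin}}\to KO_1\) are isomorphisms. The composite \(\Omega_1^{\mathrm{String}}\to\tmf_1\to KO_1\) is the Witten genus followed by the \(q^0\) cusp map. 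In degree one the map \(\tmf_1\to KO_1\) sends \(\eta\mapsto\eta\) and is an isomorphism, which I would check on the periodic circle. It therefore suffices to show \(\operatorname{ABS}[Z]=(\chi(X)\bmod2)\eta\) in \(KO_1\).

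This step has a compatibility to verify. The ABS image of the String manifold \(Z\) must agree with the \(KO\)-pushforward \(p^{KO}_{Z!}(1)\), with \(Z\) given the spin structure induced by the String structure. This holds because \(\sigma_{\mathrm{AHR}}\) refines the ABS orientation \(M\mathrm{Spin}\to KO\) along \(M\mathrm{String}\to M\mathrm{Spin}\).

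\emph{Step 2: express \(\operatorname{ABS}[Z]\) as a twisted index.} Apply Proposition~\ref{prop:odd-rank-abs} with \(E=V_7\). This gives \(\operatorname{ABS}[Z]=(\ind_{\mathbb C}(D_X\otimes S(V_7)_{\mathbb C})\bmod 2)\eta\).

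\emph{Step 3: identify the twisting bundle.} By \eqref{eq:spinor-v7-tangent}, \(S(V_7)\cong TX\) as real bundles. Hence the index appearing in Step 2 is \(I_{TX}\) from \eqref{eq:def-tangent-twisted-index}.

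\emph{Step 4: compute the parity.} By \eqref{eq:index-euler-parity}, \(I_{TX}\equiv\chi(X)\pmod 2\). This uses \eqref{eq:index-ahat-euler-signature}, \(I_{TX}=24\widehat A(X)-\Sign(X)\), together with \(\Sign\equiv b_4\equiv\chi\pmod 2\). Combining the four steps gives \(\mu_7(X)=(\chi(X)\bmod2)\eta\).

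\emph{Main obstacle.} I expect the hardest part to be the spin-structure bookkeeping in Step 1. The spin structure on \(Z\) used in the \(KO\)-pushforward must coincide with the one underlying the String structure on \(Z\). Both come from the stable isomorphism \(TZ\simeq(TX-V_7)|_Z\) and the spin structures on \(TX\) and \(V_7\), so they should agree.

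A second point needs care: the Morita convention in \eqref{eq:graded-morita} must not introduce a sign or an extra \(\eta\)-twist. Because the target group \(\ZZ/2\) has no signs, only the possibility of an extra twist matters. I would rule it out with a normalization check. For a trivial \(V_7\), a nowhere-vanishing section gives \(Z=\emptyset\) and \(\mu_7=0\), matching \([S(E)]-8=0\). A nonbounding example would fix the nontrivial normalization; \(Y_{\mathrm F}\) later in the paper could serve, but within this proof it is enough that Proposition~\ref{prop:odd-rank-abs} already fixes that normalization.
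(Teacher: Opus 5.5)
Your proposal is correct and follows the paper's proof essentially step for step: it reduces through the degree-one isomorphisms \(\Omega_1^{\mathrm{String}}\to\Omega_1^{\mathrm{Spin}}\to KO_1\), applies Proposition~\ref{prop:odd-rank-abs} with \(E=V_7\), identifies \(S(V_7)\cong TX\), and concludes with \(I_{TX}\equiv\chi(X)\pmod 2\). Your extra remarks, on the spin structure of \(Z\) agreeing with the one underlying its String structure and on the Morita normalization, only spell out what the paper leaves implicit in \eqref{eq:degree1-bordism-ko} and in the proof of Proposition~\ref{prop:odd-rank-abs}.
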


\begin{proof}
The zero locus defines \(\mu_7(X)\) by \eqref{eq:def-mu7}. Under the isomorphisms \eqref{eq:degree1-bordism-ko}, its image is the ABS class in \eqref{eq:abs-twisted-index}. Equations~\eqref{eq:spinor-v7-tangent} and \eqref{eq:index-euler-parity} identify this class with \((\chi(X)\bmod2)\eta\). Since \(\Omega_1^{\mathrm{String}}\to KO_1\) is an isomorphism, the \(KO\) computation determines the String bordism class and its image in \(\tmf_1\).
\end{proof}

We summarize the geometric and index-theoretic dictionary in the following schematic chain,
\begin{equation}
\boxed{
\begin{aligned}
(TX-V_7)|_Z\simeq TZ
&\longrightarrow
[Z]_{\mathrm{String}}
\longrightarrow
[Z]_{\mathrm{ABS}}
\\
&\longrightarrow
\Bigl(\ind_{\mathbb C}\!\bigl(D_X\otimes S(V_7)_{\mathbb C}\bigr)\bmod2\Bigr)\eta,
\\
&\xrightarrow{\ S(V_7)\cong TX\ }
\Bigl(\ind_{\mathbb C}\!\bigl(D_X\otimes TX_{\mathbb C}\bigr)\bmod2\Bigr)\eta
\\
&=
\bigl(\chi(X)\bmod2\bigr)\eta.
\end{aligned}
}
\label{eq:spin7-smith-dictionary}
\end{equation}
The first two arrows convert the virtual rank-one bundle into actual 1D String geometry and then apply the ABS orientation. The Spin(7) representation identifies the twisting bundle with \(TX\), and the index theorem reduces its parity to the Euler parity.

\subsection{Dirac--Ramond localization and oscillator cancellation}
\label{subsec:dirac-ramond-localization}

The Smith formula determines the zero-mode contribution. We now ask whether the positive-energy oscillators carry additional degree-one information. After removing the universal vacuum contribution, the entire real-\(KO\) Dirac--Ramond series localizes to the same 1D String locus. For the canonical Spin(7) model,
\begin{equation}
\boxed{
\Phi_2(X,V_7;q)
=
\bigl(\chi(X)\bmod 2\bigr)\eta
}
\label{eq:spin7-full-ko-series}
\end{equation}
as an equality of complete normalized \(q\)-series. In particular, every positive \(q\)-coefficient vanishes.

The mechanism is the oscillator version of the zero-locus construction. The \(KO\) Euler class saturates the left-moving fermion zero modes and restricts the index to \(Z\). Along this locus, the normal bosonic oscillators and the \(E\)-valued fermionic oscillators cancel. Only the tangent oscillators of \(Z\) remain. For a 1D zero locus, their normalized virtual bundle is trivial.

Let \(X^d\) be a closed spin manifold and let \(E\to X\) be a real spin bundle of rank \(r\). We write
\begin{equation}
\nu=d-r.
\end{equation}
The real-\(KO\) Euler class of \(E\) is denoted by \(e_{KO}(E)\). We define
\begin{equation}
\begin{split}
\Phi_2(X,E;q)
:=
p^{KO}_!\Bigg[
e_{KO}(E)
\prod_{m\geq 1}
\Sym_{q^m}
\bigl(TX-\RR^d\bigr)
\Lambda_{-q^m}
\bigl(E-\RR^r\bigr)
\Bigg]
\\
\in KO^{-\nu}(\mathrm{pt})[[q]],
\end{split}
\label{eq:relative-ko-dirac-ramond}
\end{equation}
where \(p\colon X\to\mathrm{pt}\) is the spin pushforward.

Each factor has a direct Fock-space meaning. The bosonic oscillators produce \(\Sym_{q^m}\), while the left-moving fermionic supertrace produces \(\Lambda_{-q^m}\). The minus sign records \((-1)^F\) and does not represent a negative Hilbert space.\footnote{For a one-particle state space \(V\), the bosonic and fermionic Fock spaces are encoded by \(\Sym_t(V)=\sum_{n\geq0}t^n\Sym^n(V)\) and \(\Lambda_t(V)=\sum_{n\geq0}t^n\Lambda^n(V)\). A bosonic oscillator of energy \(m\) gives \(1+q^m+q^{2m}+\cdots=(1-q^m)^{-1}\), while a fermionic oscillator contributes \(1-q^m\) to the supertrace.} The subtraction of the trivial bundles removes the universal vacuum contribution and produces the normalized index \cite{GukovPeiPutrov2020,TachikawaYamashitaYonekura2023}.

Using
\begin{equation}
\Sym_t(W)=\Lambda_{-t}(W)^{-1},
\end{equation}
we can rewrite the oscillator contribution as
\begin{equation}
\prod_{m\geq 1}
\Sym_{q^m}
\bigl(TX-E-\RR^\nu\bigr).
\label{eq:relative-oscillator-bundle}
\end{equation}
It is useful to introduce
\begin{equation}
\Theta_q(W):=
\prod_{m\geq 1}\Sym_{q^m}(W).
\end{equation}
Then
\begin{equation}
\Phi_2(X,E;q)
=
p^{KO}_!
\left[
e_{KO}(E)
\Theta_q\bigl(TX-E-\RR^\nu\bigr)
\right].
\label{eq:relative-ko-series-compact}
\end{equation}
The use of real \(KO\), rather than complex \(K\)-theory, is essential. The degree-one class is two-torsion and disappears after passing to rational characteristic forms.

\paragraph{Localization at the Fermi zero locus.}
\label{subsec:ko-smith-localization}

Choose a section \(s\in\Gamma(X,E)\) transverse to the zero section and let
\begin{equation}
i\colon Z=s^{-1}(0)\hookrightarrow X.
\end{equation}
The zero locus is a closed spin manifold of dimension \(\nu\), and its stable tangent bundle satisfies
\begin{equation}
TZ\simeq i^*(TX-E).
\label{eq:stable-tangent-zero-locus}
\end{equation}

\begin{theorem}[Dirac--Ramond--Smith localization]
\label{thm:dirac-ramond-smith}
For every closed spin pair \((X^d,E^r)\) and every transverse section of \(E\),
\begin{equation}
\boxed{
\Phi_2(X,E;q)
=
p^{KO}_{Z!}
\prod_{m\geq 1}
\Sym_{q^m}
\bigl(TZ-\RR^\nu\bigr).
}
\label{eq:ko-smith-localization}
\end{equation}
Thus the normalized Dirac--Ramond series of the pair \((X,E)\) equals the normalized real-\(KO\) Witten series of its zero locus.
\end{theorem}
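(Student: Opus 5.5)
The plan is to use the \(KO\)-theoretic Gysin formalism for the zero-section embedding. The starting point is the identity
\begin{equation}
e_{KO}(E)=i_!(1),
\end{equation}
where \(i\colon Z\hookrightarrow X\) is the inclusion of the transverse zero locus. This identity holds because the normal bundle is \(N_{Z/X}\cong E|_Z\), and the ABS Thom class of \(E\) pulls back along the section \(s\) to the Thom class of the normal bundle. It is the same Euler--Gysin identity already used in the proof of Proposition~\ref{prop:odd-rank-abs}; here we apply it before pushing forward to a point. The spin structure on \(Z\) is the one induced by \eqref{eq:stable-tangent-zero-locus}, which is compatible with the orientations of \(X\) and \(E\).

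Next, substitute this identity into \eqref{eq:relative-ko-series-compact} and apply the projection formula \(i_!(1)\cdot y=i_!(i^*y)\), coefficient by coefficient in \(q\). Functoriality of pushforward gives \(p_!\circ i_!=p_{Z!}\). This yields
\begin{equation}
\Phi_2(X,E;q)=p^{KO}_{Z!}\,\Theta_q\bigl(i^*(TX-E)-\RR^\nu\bigr).
\end{equation}
The operations \(\Sym_{q^m}\) are exponential and commute with pullback. Because of this, \(\Theta_q\) depends only on the stable class of its argument as a virtual bundle, once the rank is normalized to zero. Applying \eqref{eq:stable-tangent-zero-locus}, \(i^*(TX-E)-\RR^\nu\simeq TZ-\RR^\nu\) in \(KO^0(Z)\), and this gives the boxed formula \eqref{eq:ko-smith-localization}.

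The main obstacle is making the projection formula and the \(q\)-expansion rigorous together. Each coefficient of \(\Theta_q\) is a finite virtual bundle, but the factors \(\Lambda_{-q^m}(E-\RR^r)\) and their inverses \(\Sym_{q^m}\) are only invertible as formal power series in \(q\). The rewriting \eqref{eq:relative-oscillator-bundle} therefore has to be done in \(KO^0(X)[[q]]\) with constant term \(1\). I would check that \(i_!\) and \(i^*\) extend \(q\)-linearly and that the projection formula holds termwise; this follows because every coefficient is a polynomial in finitely many \(\Lambda^k\) classes.

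A second point to verify is that the orientation conventions on \(Z\) agree with those fixed in \cref{subsec:string-zero-locus}. This matters because the degree-one answer is \(2\)-torsion and therefore sensitive to the spin structure. Independence of the choice of section then follows from the bordism \(W\) of Proposition~\ref{prop:spin7-zero-locus}, since \(p_{Z!}\) is a bordism invariant.
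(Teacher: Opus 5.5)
Your proposal is correct and follows essentially the same route as the paper: the Gysin identity \(e_{KO}(E)=i^{KO}_!(1)\), the projection formula together with \(p_!\circ i_!=p_{Z!}\), and the identity \(i^*(TX-E)\simeq TZ\), applied coefficient by coefficient in \(q\). Your remarks on the compatibility of the induced spin structure on \(Z\) and on working in \(KO^0(X)[[q]]\) spell out points that the paper's proof handles in a single line.
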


\begin{proof}
The spin Euler class is the Gysin image of the unit \cite[Section~4.1]{DebrayEtAl2026Smith},
\begin{equation}
e_{KO}(E)=i^{KO}_!(1).
\label{eq:ko-euler-gysin}
\end{equation}
The projection formula gives
\begin{align}
\Phi_2(X,E;q)
&=
p^{KO}_!
\left[
i^{KO}_!(1)
\Theta_q\bigl(TX-E-\RR^\nu\bigr)
\right]
\nonumber\\
&=
p^{KO}_{Z!}
i^*\Theta_q\bigl(TX-E-\RR^\nu\bigr)
\nonumber\\
&=
p^{KO}_{Z!}
\Theta_q\bigl(TZ-\RR^\nu\bigr),
\end{align}
where the last equality follows from \eqref{eq:stable-tangent-zero-locus}. The argument applies coefficient by coefficient and therefore holds in the \(q\)-adic completion.
\end{proof}

This proof makes the cancellation transparent. The \(KO\) Euler class places the index on \(Z\), while the normal bundle \(E|_Z\) disappears from the oscillator class. This is the index-theoretic form of the cancellation between the normal bosonic modes and the left-moving Fermi modes.

When \(\nu=1\), every oriented 1D manifold has
\begin{equation}
TZ\cong\RR
\end{equation}
as an oriented real bundle. Its spin structure is still detected by the pushforward. The oscillator class itself becomes
\begin{equation}
\Theta_q(TZ-\RR)=1.
\end{equation}
Theorem~\ref{thm:dirac-ramond-smith} therefore gives
\begin{equation}
\boxed{
\Phi_2(X,E;q)
=
p^{KO}_{Z!}(1)
=
\operatorname{ABS}[Z].
}
\label{eq:degree-one-ko-localization}
\end{equation}
This is an equality of the entire normalized series, not only of its constant term.

For the canonical Spin(7) model,
\begin{equation}
d=8,
\qquad
E=V_7,
\qquad
r=7,
\qquad
\nu=1.
\end{equation}
The Smith calculation above identified the ABS class of the zero locus.
\begin{equation}
\operatorname{ABS}[Z]
=
\bigl(\chi(X)\bmod2\bigr)\eta.
\end{equation}
Equation~\eqref{eq:degree-one-ko-localization} now proves the following result.

\begin{corollary}[The complete normalized Spin(7) series]
\label{cor:spin7-complete-ko-series}
For every closed 8D Spin(7) manifold,
\begin{equation}
\Phi_2(X,V_7;q)
=
\bigl(\chi(X)\bmod2\bigr)\eta
\in KO^{-1}(\mathrm{pt})[[q]].
\end{equation}
All positive powers of \(q\) vanish.
\end{corollary}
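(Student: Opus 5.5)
The corollary follows by combining the Dirac--Ramond--Smith localization with the degree-one Smith computation; no new index calculation is needed. We specialize Theorem~\ref{thm:dirac-ramond-smith} to the pair \((X,E)=(X,V_7)\) with \(d=8\), \(r=7\), and \(\nu=1\). We first check its hypotheses. \(X\) is spin because \(TX\) is associated to \(\Spin(7)\) through \(\Delta_8\), which lifts to \(\Spin(8)\), and \(V_7\) is spin because \(\rho_7\) lifts to \(\Spin(7)\). Both facts were recorded in \eqref{eq:integral-spin7-generator} and the surrounding discussion. A transverse section \(s\) of \(V_7\) exists by standard transversality, and its zero locus \(Z\) is a closed 1D spin manifold with \(TZ\simeq i^*(TX-V_7)\) by \eqref{eq:stable-tangent-zero-locus}.

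Theorem~\ref{thm:dirac-ramond-smith} then gives \(\Phi_2(X,V_7;q)=p^{KO}_{Z!}\Theta_q(TZ-\RR)\). The next step is to show that \(\Theta_q(TZ-\RR)=1\) in \(KO^0(Z)[[q]]\). Every component of \(Z\) is a circle, and an oriented line bundle over a circle is trivial, so \(TZ\cong\RR\) and the virtual class \(TZ-\RR\) vanishes. Since \(\Sym_{q^m}\) is exponential, \(\Sym_{q^m}(0)=1\) for every \(m\), and the product is \(1\) coefficient by coefficient. Hence \(\Phi_2(X,V_7;q)=p^{KO}_{Z!}(1)=\operatorname{ABS}[Z]\), which is \eqref{eq:degree-one-ko-localization}.

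It remains to identify \(\operatorname{ABS}[Z]\). Proposition~\ref{prop:odd-rank-abs} expresses it as \(\bigl(\ind_{\mathbb C}(D_X\otimes S(V_7)_{\mathbb C})\bmod 2\bigr)\eta\). The isomorphism \eqref{eq:spinor-v7-tangent} replaces \(S(V_7)\) by \(TX\), and the parity identity \eqref{eq:index-euler-parity} replaces the index by \(\chi(X)\bmod 2\). The result is a constant \(q\)-series, so all positive powers of \(q\) vanish.

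The one point needing care is that the trivialization \(TZ\cong\RR\) is used only at the level of \(KO\) classes. The spin structure on each circle, periodic or antiperiodic, is not lost: it is retained in the pushforward \(p^{KO}_{Z!}\), and that is precisely what \(\operatorname{ABS}[Z]\) detects. Beyond this bookkeeping, the statement is a direct assembly of earlier results.
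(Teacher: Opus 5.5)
Your proposal is correct and follows the paper's own route. You specialize the Dirac--Ramond--Smith localization to $(X,V_7)$ with $\nu=1$, use $TZ\cong\RR$ to make the oscillator factor trivial so that $\Phi_2(X,V_7;q)=p^{KO}_{Z!}(1)=\operatorname{ABS}[Z]$, and then identify $\operatorname{ABS}[Z]$ through Proposition~\ref{prop:odd-rank-abs}, $S(V_7)\cong TX$, and $I_{TX}\equiv\chi(X)\pmod 2$; your extra checks of the spin hypotheses and your remark that the spin structure of $Z$ survives in the pushforward match the paper's own comments.
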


The vanishing of the positive \(q\)-coefficients does not mean that the sigma model has no excited states. It means that their contributions cancel in this supersymmetric index. The spin structure of \(Z\) remains in the pushforward \(p^{KO}_{Z!}(1)\), so the constant series can still equal the nonzero class \(\eta\).

\paragraph{Normalized and unnormalized indices.}
The class \(\Phi_2(X,E;q)\) is the complete normalized real-\(KO\) BPS index series. When the QFT index is identified with the sigma-model expression, the unnormalized mod-two elliptic genus of a QFT with gravitational degree \(\nu=1\) takes the form
\begin{equation}
I_2(q)
=
\eta_{\mathrm{Ded}}(q)^{-1}
\Phi_2(X,E;q).
\label{eq:raw-mod-two-genus}
\end{equation}
Even when \(\Phi_2\) is constant, the Dedekind factor carries nontrivial \(q\)-dependence. The localization therefore fixes the complete normalized BPS index.\footnote{The full torus partition function \(\cZ_{\mathrm{CFT}}(\tau,\bar\tau)\), the raw mod-two elliptic genus \(I_2(q)\), and the normalized series \(\Phi_2(q)\) are three different observables. The full partition function contains both BPS and non-BPS states and is a section of the gravitational anomaly line. The raw genus is a BPS index. We compute \(\Phi_2(q)\), not the full torus partition function.}

\subsection{The degree-one \texorpdfstring{\(KO_{\mathrm{MF}}\)}{KO(MF)} class}
\label{subsec:spin7-ko-series}
\label{subsec:komf-lift}

The localized series is a real-\(KO\) \(q\)-expansion near the cusp \(q=0\). A cusp expansion is local information on the moduli of elliptic curves and need not determine a global modular-topological class. The spectrum \(KO_{\mathrm{MF}}\) records real-\(KO\) cusp data compatible with modular-form information. It is defined as the homotopy pullback of commutative ring spectra \cite{BerwickEvans2023}
\begin{equation}
KO_{\mathrm{MF}}
=
H_{\mathrm{MF}}
\times_{H_{\mathbb C((q))[u^{\pm2}]}}
KO((q)),
\label{eq:komf-pullback}
\end{equation}
where \(\mathrm{MF}\) denotes the graded ring of weakly holomorphic modular forms with weight-\(k\) forms in degree \(2k\), the coefficients \(\mathbb C((q))[u^{\pm2}]\) with \(|u^2|=-4\) sit in degrees divisible by four, the map from \(KO((q))\) is the Pontryagin character, and the map from \(H_{\mathrm{MF}}\) is \(q\)-expansion. Through the universal property of \eqref{eq:komf-pullback}, the Miller character and the Chern--Dold character assemble into
\begin{equation}
\TMF\longrightarrow KO_{\mathrm{MF}}\longrightarrow KO((q)).
\label{eq:komf-maps}
\end{equation}
The question is whether the class computed above lifts through the second map, the cusp projection.

\begin{lemma}[Degree-one cusp isomorphism]
\label{lem:pi1-komf}
The cusp projection induces an isomorphism
\begin{equation}
q\text{-exp}\colon
\pi_1KO_{\mathrm{MF}}
\xrightarrow{\ \simeq\ }
\pi_1KO((q))
\cong\ZZ/2((q)).
\label{eq:degree-one-cusp-isomorphism}
\end{equation}
\end{lemma}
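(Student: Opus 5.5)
The plan is to read off \(\pi_1KO_{\mathrm{MF}}\) from the Mayer--Vietoris long exact sequence of the homotopy pullback \eqref{eq:komf-pullback}:
\begin{equation}
\cdots\to\pi_2 H_{\mathrm{MF}}\oplus\pi_2KO((q))\to\pi_2H_{\mathbb C((q))[u^{\pm2}]}\to\pi_1KO_{\mathrm{MF}}\to\pi_1H_{\mathrm{MF}}\oplus\pi_1KO((q))\to\pi_1H_{\mathbb C((q))[u^{\pm2}]}\to\cdots
\end{equation}
The isomorphism then comes down to degree bookkeeping in the three corners.

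First I will record the homotopy groups. \(H_{\mathrm{MF}}\) is rational Eilenberg--MacLane with weight-\(k\) forms in degree \(2k\). Weakly holomorphic modular forms for \(SL_2(\ZZ)\) vanish in odd weight, since \(-I\) acts by \((-1)^k\), so \(\pi_*H_{\mathrm{MF}}\) is concentrated in degrees divisible by four. In particular \(\pi_1H_{\mathrm{MF}}=0\), and \(\pi_2H_{\mathrm{MF}}=\pi_2H_{\mathbb C((q))[u^{\pm2}]}=0\). The coefficient ring \(\mathbb C((q))[u^{\pm2}]\) lives in degrees \(\equiv0\pmod 4\), so \(\pi_1\) and \(\pi_2\) of that corner vanish as well. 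Finally, \(\pi_*KO((q))=KO_*((q))\) with \(KO_1=\ZZ/2\{\eta\}\). I will take \(KO((q))\) to mean the spectrum with \(\pi_n=KO_n((q))\), for instance as the appropriate completion or colimit of products. A brief remark checking that Laurent series commute with homotopy here is needed.

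Substituting these values, the sequence reduces to
\begin{equation}
0\to\pi_1KO_{\mathrm{MF}}\to 0\oplus KO_1((q))\to0.
\end{equation}
So the map \(\pi_1KO_{\mathrm{MF}}\to\pi_1KO((q))\) is an isomorphism. This map is exactly the cusp projection of \eqref{eq:komf-maps}, because the pullback's projection onto the \(KO((q))\) factor is that map. This gives \eqref{eq:degree-one-cusp-isomorphism}, with target \(\ZZ/2((q))\).

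The main obstacle is making the degree conventions precise rather than the exact-sequence step itself. I need to confirm that odd-weight weakly holomorphic forms really vanish for the level used in \cite{BerwickEvans2023}; if a congruence subgroup were used, odd weights could appear. I also need to confirm that the grading convention (weight \(k\) in degree \(2k\), \(|u^2|=-4\)) puts nothing in degrees \(1\) or \(2\). If odd weights did appear, \(\pi_2H_{\mathrm{MF}}\) could be nonzero. Even then, it is rational and maps into \(\pi_2\) of the \(\mathbb C((q))\) corner, which is zero, so the exactness argument would only need \(\pi_1H_{\mathrm{MF}}=0\). That vanishing still holds because \(\pi_*H_{\mathrm{MF}}\) is concentrated in even degrees. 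Checking these even-degree concentration statements is the point I would verify first.
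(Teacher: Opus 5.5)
Your proposal is correct and matches the paper's proof. Both run the Mayer--Vietoris sequence of the pullback, use that \(\pi_1\) and \(\pi_2\) of the \(\mathbb C((q))[u^{\pm2}]\) corner vanish for degree reasons and that \(\pi_1H_{\mathrm{MF}}=0\) because \(\mathrm{MF}\) sits in even degrees, and conclude that the projection to \(KO((q))\) is an isomorphism on \(\pi_1\); your extra checks (vanishing of odd weights, \(\pi_*KO((q))=KO_*((q))\)) are additional care that the paper takes for granted.
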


\begin{proof}
The Mayer--Vietoris sequence of the homotopy pullback \eqref{eq:komf-pullback} reads
\begin{equation}
\pi_2H_{\mathbb C((q))[u^{\pm2}]}
\longrightarrow
\pi_1KO_{\mathrm{MF}}
\longrightarrow
\pi_1KO((q))\oplus\pi_1H_{\mathrm{MF}}
\longrightarrow
\pi_1H_{\mathbb C((q))[u^{\pm2}]}.
\end{equation}
The outer groups vanish because \(\mathbb C((q))[u^{\pm2}]\) is concentrated in degrees divisible by four. The summand \(\pi_1H_{\mathrm{MF}}\) vanishes because \(\mathrm{MF}\) is concentrated in even degrees. The remaining map is therefore an isomorphism onto \(\pi_1KO((q))=KO_1((q))\cong\ZZ/2((q))\).
\end{proof}

Surjectivity gives existence of the lift, while injectivity gives uniqueness.

Let \((X^d,E^r)\) be a closed spin pair and let
\begin{equation}
\alpha\colon\lambda(TX)-\lambda(E)\simeq0
\end{equation}
be an integral String trivialization of the virtual bundle \(TX-E\). A transverse zero locus \(Z=s^{-1}(0)\) inherits a String structure \(\alpha_Z\) through \eqref{eq:stable-tangent-zero-locus}. We define
\begin{equation}
\mu_E^{KO_{\mathrm{MF}}}(X;\alpha)
:=
\left(
M\mathrm{String}
\xrightarrow{\sigma_{\mathrm{AHR}}}
\TMF
\longrightarrow
KO_{\mathrm{MF}}
\right)
[Z,\alpha_Z]
\in\pi_\nu KO_{\mathrm{MF}}.
\label{eq:komf-smith-class}
\end{equation}
A transverse homotopy of sections gives a String bordism between the corresponding zero loci. Hence \eqref{eq:komf-smith-class} is independent of the chosen section.

The \(KO_{\mathrm{MF}}\) index theorem identifies the String zero-locus class with its analytic index \cite{BerwickEvans2023}. Its cusp image is the normalized real-\(KO\) Dirac--Ramond series of \(Z\). We write this series as
\begin{equation}
\Phi_2(Z;q)
:=
p^{KO}_{Z!}\Theta_q\bigl(TZ-\RR^\nu\bigr).
\end{equation}
Combining this comparison with Theorem~\ref{thm:dirac-ramond-smith} gives
\begin{equation}
\boxed{
q\text{-exp}
\left(
\mu_E^{KO_{\mathrm{MF}}}(X;\alpha)
\right)
=
\Phi_2(Z;q)
=
\Phi_2(X,E;q).
}
\label{eq:komf-cusp-localization}
\end{equation}
Thus the same Fermi zero locus determines both the complete cusp series and its \(KO_{\mathrm{MF}}\) lift. For this zero-locus class, separate \(KO_{\mathrm{MF}}\) orientations of \(X\) and \(E\) are not required.\footnote{One may choose a geometric refinement of \(\alpha_Z\) to represent the analytic index. The resulting topological class is independent of that choice.}

\begin{remark}
\label{rmk:degree-one-elementary}
In degree one, the comparison \eqref{eq:komf-cusp-localization} is elementary. The group \(\Omega_1^{\mathrm{String}}\cong\ZZ/2\) is generated by the nonbounding circle, whose class is the image of the stable Hopf element under the unit map from the sphere spectrum to \(M\mathrm{String}\). Every map in \(M\mathrm{String}\to\TMF\to KO_{\mathrm{MF}}\to KO((q))\) is a map of ring spectra and therefore carries \(\eta\) to \(\eta\). The cusp image of \(\mu_E^{KO_{\mathrm{MF}}}(X;\alpha)\) is then the constant series \(\operatorname{ABS}[Z]\in KO_1\subset KO_1((q))\), in agreement with \eqref{eq:degree-one-ko-localization}. The degree-one results below therefore use the analytic index theorem of \cite{BerwickEvans2023} only for context, not as an input.
\end{remark}

\begin{theorem}[Degree-one \texorpdfstring{\(KO_{\mathrm{MF}}\)}{KO(MF)} lift]
\label{thm:degree-one-komf}
Let \((X^d,E^r,\alpha)\) be a closed spin pair with an integral String trivialization of \(TX-E\) and \(d-r=1\). Then \(\Phi_2(X,E;q)\) determines the complete \(KO_{\mathrm{MF}}\) class \(\mu_E^{KO_{\mathrm{MF}}}(X;\alpha)\) through the isomorphism \eqref{eq:degree-one-cusp-isomorphism}.

For \(E=V_7\), let
\begin{equation}
\widetilde\eta\in\pi_1KO_{\mathrm{MF}}
\end{equation}
be the unique element satisfying
\begin{equation}
q\text{-exp}(\widetilde\eta)=\eta.
\end{equation}
Then
\begin{equation}
\boxed{
\mu_{V_7}^{KO_{\mathrm{MF}}}(X)
=
\bigl(\chi(X)\bmod2\bigr)\widetilde\eta.
}
\label{eq:spin7-komf-class}
\end{equation}
\end{theorem}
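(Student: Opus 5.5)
The plan is to split the statement into its two claims and derive each from results already established: the degree-one cusp isomorphism of Lemma~\ref{lem:pi1-komf}, the localization identity \eqref{eq:komf-cusp-localization} (or its elementary form in Remark~\ref{rmk:degree-one-elementary}), and Corollary~\ref{cor:spin7-complete-ko-series}. No new geometric input should be needed. The work is in tracking which map is applied at each step.

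\emph{First claim.} Since \(d-r=1\), the class \(\mu_E^{KO_{\mathrm{MF}}}(X;\alpha)\) lies in \(\pi_1KO_{\mathrm{MF}}\). By \eqref{eq:komf-cusp-localization}, its image under \(q\text{-exp}\) is \(\Phi_2(X,E;q)\). Lemma~\ref{lem:pi1-komf} says \(q\text{-exp}\) is injective on \(\pi_1\), so the class is the unique preimage of \(\Phi_2(X,E;q)\). Hence the series determines the class.

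To avoid relying on the analytic \(KO_{\mathrm{MF}}\) index theorem, I will justify \eqref{eq:komf-cusp-localization} in degree one as in Remark~\ref{rmk:degree-one-elementary}. The String class \([Z,\alpha_Z]\in\Omega_1^{\mathrm{String}}\cong\ZZ/2\) is either \(0\) or the image of \(\eta\) from the sphere spectrum. Each map in the composite \(M\mathrm{String}\to\TMF\to KO_{\mathrm{MF}}\to KO((q))\) is a ring map, so it sends \(\eta\) to \(\eta\). The cusp image is therefore the constant series \(\operatorname{ABS}[Z]\). By \eqref{eq:degree-one-ko-localization}, this equals \(\Phi_2(X,E;q)\).

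\emph{Second claim.} Take \(E=V_7\). By Lemma~\ref{lem:pi1-komf}, the constant series \(\eta\in\ZZ/2((q))\) has a unique preimage, so \(\widetilde\eta\) is well defined. Corollary~\ref{cor:spin7-complete-ko-series} gives
\[
q\text{-exp}\bigl(\mu_{V_7}^{KO_{\mathrm{MF}}}(X)\bigr)=\bigl(\chi(X)\bmod2\bigr)\eta.
\]
By additivity, \(q\text{-exp}\bigl((\chi(X)\bmod 2)\widetilde\eta\bigr)\) is the same series. Injectivity then gives \eqref{eq:spin7-komf-class}. As a cross-check, \(\widetilde\eta\) is also the image of \(\eta\in\TMF_1\) under the Miller map, so the result is consistent with Theorem~\ref{thm:spin7-smith-formula}.

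The main point needing care is the identification of the \(KO_{\mathrm{MF}}\) cusp image with the ABS pushforward \(p^{KO}_{Z!}(1)\). Concretely, I must check that the composite \(\TMF\to KO_{\mathrm{MF}}\to KO((q))\), precomposed with \(\sigma_{\mathrm{AHR}}\), agrees on \(\pi_1\) with the spin ABS orientation \(\Omega_1^{\mathrm{Spin}}\to KO_1\). I plan to settle this on the generator alone. Both orientations send the periodic circle to the image of \(\eta\), and both vanish on the bounding circle. This reduces the check to unit-compatibility of the ring maps, with no need for the analytic index theorem.
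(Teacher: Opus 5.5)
Your proposal is correct and follows the paper's own proof: it combines \eqref{eq:komf-cusp-localization} with Corollary~\ref{cor:spin7-complete-ko-series} and then applies injectivity of the degree-one cusp isomorphism of Lemma~\ref{lem:pi1-komf}. Your extra justification of the cusp identification uses the ring maps' compatibility with units and the fact that \(\Omega_1^{\mathrm{String}}\to KO_1\) is an isomorphism; this is exactly the content of Remark~\ref{rmk:degree-one-elementary}, which the paper likewise uses instead of the analytic index theorem.
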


\begin{proof}
Lemma~\ref{lem:pi1-komf} gives the isomorphism \eqref{eq:degree-one-cusp-isomorphism}. Equation~\eqref{eq:komf-cusp-localization} and Corollary~\ref{cor:spin7-complete-ko-series} give
\begin{equation}
q\text{-exp}
\left(
\mu_{V_7}^{KO_{\mathrm{MF}}}(X)
\right)
=
\bigl(\chi(X)\bmod2\bigr)\eta.
\end{equation}
Injectivity of the cusp projection gives \eqref{eq:spin7-komf-class}.
\end{proof}

The theorem is particularly simple because the degree-one class is torsion. The modular-form component carries no odd-degree information, while the real-\(KO\) cusp detects the complete class. For the Fermat quotient,
\begin{equation}
\mu_{V_7}^{KO_{\mathrm{MF}}}(Y_{\mathrm F})
=
\widetilde\eta.
\end{equation}

The cusp image of the String zero-locus class \(\mu_E^{KO_{\mathrm{MF}}}(X;\alpha)\) is \(\Phi_2(X,E;q)\), and the degree-one cusp isomorphism recovers this class uniquely. This determines the class-level \(KO_{\mathrm{MF}}\) lift. Constructing a pair-level \(KO_{\mathrm{MF}}\) orientation, a local \(2|1\) field-theory representative, and its identification with the interacting sigma model remain open problems, which will be discussed in \cref{sec:discussion}.

\section{Joyce orbifolds}
\label{sec:joyce-orbifolds}

We now place the chiral Spin(7) sigma model at the standard Joyce orbifold point. The corresponding smooth Joyce target realizes the trivial element \(0\in\tmf_1\). We separate this geometric result from the finite-spin anomaly that determines whether the Joyce symmetry can be gauged to define the singular orbifold CFT.

\subsection{The chiral theory at the Joyce point}
\label{subsec:joyce-canonical-orbifold}

Consider the standard Joyce action \(\Gamma=\langle\alpha,\beta,\gamma,\delta\rangle\cong(\ZZ_2)^4\) on \(T^8\) \cite{Joyce1996Spin7,Sugiyama:2001qh}.
\begin{align}
\alpha(x)&=(-x_1,-x_2,-x_3,-x_4,x_5,x_6,x_7,x_8),\nonumber\\
\beta(x)&=(x_1,x_2,x_3,x_4,-x_5,-x_6,-x_7,-x_8),\nonumber\\
\gamma(x)&=\left(\tfrac12-x_1,\tfrac12-x_2,x_3,x_4,
\tfrac12-x_5,\tfrac12-x_6,x_7,x_8\right),\nonumber\\
\delta(x)&=\left(-x_1,x_2,\tfrac12-x_3,x_4,
-x_5,x_6,\tfrac12-x_7,x_8\right).
\label{eq:joyce-action}
\end{align}
Each generator is an isometry of the flat torus whose linear action preserves the Cayley form. It therefore defines a finite internal global symmetry of the \(T^8\) worldsheet theory and preserves the right-moving supercurrent.

The eight bosons and eight right-moving Majorana fermions transform in the restricted \(\Delta_8\) representation. The seven left-moving Majorana fermions transform in \(\rho_7\). The chiral orbifold has field content
\begin{equation}
\mathcal C^{(0,1)}_{\mathrm{Joyce},7}
=
\left[
T^8\text{ bosons}
\oplus8\psi_R^{\Delta_8}
\oplus7\lambda_L^{\rho_7}
\right]\big/\Gamma.
\label{eq:joyce-canonical-orbifold}
\end{equation}
Orbifolding means gauging \(\Gamma\). The finite internal anomaly must therefore vanish, but this requirement is independent of the gravitational anomaly caused by the unequal numbers of left-moving and right-moving fermions. We find
\begin{equation}
\mathcal A^{\mathrm{int}}_\Gamma
(\Delta_8-\rho_7-\mathbf1)=0,
\qquad
c_R-c_L=\frac12.
\label{eq:joyce-physics-summary}
\end{equation}
The first result allows the finite gauging after a choice of anomaly trivialization. The second result says that the gauged theory remains gravitationally anomalous. Its partition function is a section of the gravitational anomaly line rather than a modular-invariant function. We derive the finite-spin result in \cref{subsec:joyce-finite-spin}.

The target geometry answers a different question. Its unweighted orbifold Euler characteristic follows from the commuting-pair formula \cite{BryanFulman1998}.
\begin{equation}
\chi_{\mathrm{orb}}(T^8,\Gamma)
=
\frac1{16}\sum_{g,h\in\Gamma}\chi\bigl((T^8)^{g,h}\bigr).
\label{eq:joyce-standard-commuting-pairs}
\end{equation}
Let \(g_0=\alpha\beta\), which reflects all eight coordinates. The line \(\langle g_0\rangle\) gives the three ordered pairs \((1,g_0)\), \((g_0,1)\), and \((g_0,g_0)\). The half-shifts leave one compatible rank-two sign plane, \(\langle\alpha,\beta\rangle\), which gives six ordered generating pairs. Each of these nine pairs has \(2^8=256\) isolated common fixed points. Hence
\begin{equation}
\boxed{
\chi_{\mathrm{orb}}(T^8,\Gamma)
=
\frac{(3+6)2^8}{2^4}
=144.
}
\label{eq:joyce-standard-euler}
\end{equation}
This direct orbifold calculation gives \(144\) and reproduces the known Euler characteristic of the first smooth Joyce example \cite{Joyce1996Spin7,Sugiyama:2001qh}. Let \(X_{\mathrm J}\) denote that smooth Spin(7) resolution. Applying \cref{thm:spin7-smith-formula}, we obtain
\begin{equation}
\boxed{
\begin{gathered}
\chi(X_{\mathrm J})=144,\\
\mu_7(X_{\mathrm J})
=
(144\bmod2)\eta
=
0
\in\tmf_1\cong\ZZ/2\{\eta\}.
\end{gathered}
}
\label{eq:joyce-tmf-class}
\end{equation}
The smooth Joyce target therefore realizes the trivial element rather than the nonzero generator \(\eta\). This geometric class lies in degree one and does not remove the gravitational anomaly in \eqref{eq:joyce-physics-summary}. Equation \eqref{eq:joyce-standard-commuting-pairs} is a geometric fixed-point calculation.\footnote{For the ordinary nonchiral \(\mathcal N=(1,1)\) orbifold, the same commuting-pair sum is the Ramond--Ramond Witten index. One group element labels the spatially twisted sector and the other is inserted along Euclidean time to implement the projection. This interpretation does not turn the geometric sum into a protected index of the standalone chiral theory.}

The same Joyce action appears in two familiar worldsheet constructions, but neither one is the chiral CFT in \eqref{eq:joyce-canonical-orbifold}. In the heterotic construction of Sugiyama and Yamaguchi, sixteen pre-GSO fermions split into seven and nine, and the gauge sector uses the conformal embedding
\begin{equation}
E_{8,1}\supset\mathfrak{so}(7)_1\oplus\mathfrak{so}(9)_1.
\label{eq:joyce-heterotic-conformal-embedding}
\end{equation}
The GSO projection and the extension to \(E_{8,1}\) correlate the vacuum, vector, and spinor sectors of the two current algebras. The physical Hilbert space no longer factorizes into seven-fermion and nine-fermion Hilbert spaces, so one cannot divide the modular-invariant heterotic partition function by an \(\mathfrak{so}(9)_1\) character. The ordinary \(\mathcal N=(1,1)\) Joyce orbifold is also different. It has tangent fermions in both chiralities, \(c_L=c_R=12\), and no gravitational anomaly. We give the full character identities and protected genera of these comparison theories in \cref{app:joyce-worldsheet-comparisons}.

The standard Joyce point therefore realizes a consistent finite gauging with a nonzero gravitational anomaly, while its smooth target has the trivial geometric class in \eqref{eq:joyce-tmf-class}. We next derive the gauging statement, and then ask whether varying the Cayley-preserving signs and half-shifts can produce an odd Euler characteristic.

\subsection{The finite-spin anomaly}
\label{subsec:joyce-finite-spin}

We now determine whether the finite symmetry in \eqref{eq:joyce-canonical-orbifold} can be gauged. The bosonic action is nonchiral, so the obstruction comes from the fermion path integral. We subtract one trivial representation from \(\Delta_8-\rho_7\) to separate the internal finite-group anomaly from the gravitational anomaly of one right-moving Majorana fermion. The reduced virtual representation is
\begin{equation}
R_{\mathrm{int}}=\Delta_8|_\Gamma-\rho_7|_\Gamma-\mathbf1.
\label{eq:joyce-reduced-representation}
\end{equation}

The continuous Abelian analogy makes the calculation transparent. A 2D chiral anomaly is a signed sum of charge products over right-moving and left-moving fermions. For \(G=(\ZZ_2)^k\), the spin refinement is classified by \cite{GuoEtAl2018}
\begin{equation}
\Omega_3^{\mathrm{Spin}}(BG)
\cong
(\ZZ_8)^k
\oplus(\ZZ_4)^{\binom{k}{2}}
\oplus(\ZZ_2)^{\binom{k}{3}}.
\label{eq:finite-spin-bordism-group}
\end{equation}
A character is represented by charges \(q_a\in\FF_2\), where the generator \(g_a\) acts by \((-1)^{q_a}\). The three factors in \eqref{eq:finite-spin-bordism-group} are detected by the signed moments \cite{Grigoletto:2021zyv}
\begin{equation}
L_a=\sum_Rq_a-\sum_Lq_a,
\qquad
P_{ab}=\sum_Rq_aq_b-\sum_Lq_aq_b,
\qquad
T_{abc}=\sum_Rq_aq_bq_c-\sum_Lq_aq_bq_c.
\label{eq:joyce-anomaly-moments}
\end{equation}
They are evaluated modulo \(8\), \(4\), and \(2\), respectively.

Restricting \(\Delta_8\) and \(\rho_7\) to the four Joyce generators gives
\begin{equation}
L_a=0,
\qquad
P_{12}=-4,
\qquad
T_{123}=T_{124}=-2,
\label{eq:joyce-anomaly-moment-values}
\end{equation}
with every unlisted component equal to zero. The linear moments vanish modulo \(8\), the quadratic moments vanish modulo \(4\), and the cubic moments vanish modulo \(2\). The complete character lists and a basis-independent representation-ring derivation appear in \cref{app:finite-spin-anomaly}.

\begin{theorem}[Finite-spin anomaly and finite gauging]
\label{thm:joyce-finite-spin-anomaly}
For the Joyce action in \eqref{eq:joyce-action},
\begin{equation}
\mathcal A^{\mathrm{int}}_\Gamma(\Delta_8-\rho_7-\mathbf1)=0
\in
\operatorname{Hom}
\left(
\Omega_3^{\mathrm{Spin}}(B\Gamma),U(1)
\right).
\end{equation}
After choosing a trivialization of this internal anomaly, finite gauging defines a spin \(\mathcal N=(0,1)\) orbifold CFT with
\begin{equation}
(c_R,c_L)=\left(12,\frac{23}{2}\right),
\qquad
c_R-c_L=\frac12.
\end{equation}
\end{theorem}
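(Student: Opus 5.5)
The plan is to compute the anomaly of the reduced virtual representation \(R_{\mathrm{int}}\) in \eqref{eq:joyce-reduced-representation} through its characters. We then show that every invariant detecting \(\operatorname{Hom}(\Omega_3^{\mathrm{Spin}}(B\Gamma),U(1))\) vanishes, using the decomposition \eqref{eq:finite-spin-bordism-group} with \(k=4\).

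\textbf{Decomposition into characters.} Every representation of \(\Gamma=(\ZZ_2)^4\) splits into real one-dimensional characters. The anomaly of a real chiral fermion is additive in the representation, so the anomaly of \(R_{\mathrm{int}}\) is the signed sum of the anomalies of its constituent characters. We first write the linear parts of \(\alpha,\beta,\gamma,\delta\) as diagonal sign matrices on \(\RR^8\); the half-shifts do not act on the fermions. Each right-moving coordinate fermion \(\psi^i\) is then a character with charge vector \(q^{(i)}\in\FF_2^4\).

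For the left-movers we use \(\rho_7\cong\Lambda^2_7\). A basis of \(\Lambda^2_7\) is given by the seven combinations \(e^{ij}+\cdots\) singled out by the Cayley form, and each basis element has charge equal to \(q^{(i)}+q^{(j)}\) for any pair \((ij)\) in it. This is consistent because every generator preserves \(\Phi\). Finally, \(\mathbf1\) is counted as a right-mover of charge zero, which contributes nothing.

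\textbf{Evaluating the moments.} With the charges in hand, we evaluate \(L_a\), \(P_{ab}\) and \(T_{abc}\) from \eqref{eq:joyce-anomaly-moments}. We expect to recover \eqref{eq:joyce-anomaly-moment-values}:
\begin{itemize}
\item the \(L_a\) vanish modulo \(8\), since each generator flips \(4\) tangent directions and \(4\) of the \(\Lambda^2_7\) directions, up to the \(\rho_7\) count;
\item \(P_{12}=-4\equiv0\) modulo \(4\);
\item the cubic moments are even.
\end{itemize}
The step needing the most care is justifying that these moments, reduced modulo \(8\), \(4\) and \(2\) respectively, form a \emph{complete} set of invariants for the dual of \eqref{eq:finite-spin-bordism-group}. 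That completeness is what is imported from \cite{GuoEtAl2018,Grigoletto:2021zyv}. Concretely, the \(\ZZ_8\) factors are detected by \(\eta\)-invariants on lens-type \(\RR P^3\) generators, the \(\ZZ_4\) factors by their two-generator analogues, and the \(\ZZ_2\) factors by the triple-product (Arf/\(T^3\)) invariant.

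To make the computation basis independent, we would also check the result in the representation ring. The characters entering \(\Delta_8|_\Gamma-\rho_7|_\Gamma-\mathbf1\) come in pairs whose anomaly contributions cancel, and a change of generators of \(\Gamma\) only permutes the moment components. Vanishing in one basis therefore suffices.

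\textbf{Gauging and central charges.} Vanishing of the internal anomaly means the anomaly theory of \(R_{\mathrm{int}}\) is trivializable. A choice of trivialization, which forms a torsor under \(H^2(B\Gamma;U(1))\) together with spin-refined discrete torsion, lets us sum over \(\Gamma\)-bundles on the spin worldsheet. This yields the orbifold theory \eqref{eq:joyce-canonical-orbifold}.

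Supersymmetry survives because every generator fixes the right-moving supercurrent, as its linear part preserves \(\Phi\) and hence lies in \(\Spin(7)\). The Virasoro central charges are unchanged by gauging a finite group, so \((c_R,c_L)=(12,23/2)\) follows from the free-field count \eqref{eq:intro-degree}. The leftover \(\mathbf1\) accounts for the gravitational anomaly \(\nu=1\), which is separate from the internal one.

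The main obstacle is the bookkeeping of the \(\rho_7\) charges for \(\gamma\) and \(\delta\). These mix the two quadruples of coordinates, so the identification of \(\Lambda^2_7\) eigenlines must be done with an explicit Cayley form convention. We would fix Joyce's \(\Phi\) and verify the seven charge vectors directly.
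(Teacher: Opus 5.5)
Your proposal is correct and is essentially the paper's own argument: restrict \(\Delta_8\) and \(\rho_7\) to the characters \(Q_R\) and \(Q_L\) listed in the appendix, evaluate \(L_a,P_{ab},T_{abc}\) to obtain \eqref{eq:joyce-anomaly-moment-values}, import from the same references the fact that these moments detect \(\operatorname{Hom}(\Omega_3^{\mathrm{Spin}}(B\Gamma),U(1))\), and gauge after choosing a trivialization, with the central charges unchanged. (The Cayley-form convention you flag as the main obstacle does not actually arise: \(\alpha\beta=-1\) acts trivially on \(\rho_7\), and each weight-four Cayley reflection has four \(-1\) eigenvalues there, which forces \(\rho_7|_\Gamma\) to be the seven nontrivial characters killed by \(\alpha\beta\), namely \(Q_L\).) One side remark is inaccurate but nothing depends on it: the cancellation is not pairwise, since \(Q_R\) and \(Q_L\) are disjoint, and a change of generators mixes the moments rather than permuting them (replacing \(\gamma\) by \(\alpha\gamma\) turns \(L_\gamma\) into \(L_\alpha+L_\gamma-2P_{\alpha\gamma}\)); basis independence holds instead because the moments detect a basis-free class, as the paper's identity \(\Delta_8|_\Gamma-\rho_7|_\Gamma=\mathbf1+\operatorname{Reg}_\Gamma-2\operatorname{Reg}_{\Gamma/\langle\alpha\beta\rangle}\) makes explicit.
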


The neutral representation is \(\Gamma\)-blind, so subtracting it does not change the internal finite-group anomaly. It only isolates the nonzero rank that controls the gravitational anomaly. Finite gauging preserves the local central charges, and only the combined system with its inverse 3D anomaly theory is invariant. To our knowledge, the finite-spin anomaly of this standalone seven-fermion Joyce orbifold has not been computed in the literature. The possible choices of anomaly trivialization and the full finite-group calculation are given in \cref{app:finite-spin-anomaly}.

Having established finite gauging, we turn to the geometry of the diagonal family.

\subsection{Cayley-preserving actions and Euler parity}
\label{subsec:cayley-signs}

The standard Joyce action has even Euler characteristic. We now ask whether different diagonal signs and half-shifts can produce an odd orbifold Euler characteristic. Let
\begin{equation}
g_{a,b}(x_i)=(-1)^{a_i}x_i+\frac{b_i}{2},
\qquad
a,b\in\FF_2^8,
\label{eq:half-affine-element}
\end{equation}
act on \(T^8=\RR^8/\ZZ^8\). The shifts do not act on \(dx_i\), so the Cayley form constrains only the sign vector \(a\). We denote the allowed signs by
\begin{equation}
C=\{a\in\FF_2^8\mid g_{a,0}^{*}\Phi=\Phi\}.
\label{eq:cayley-sign-subspace}
\end{equation}
This is a 4D binary sign subspace. Its nonzero elements are fourteen weight-four reflections and the full reflection
\begin{equation}
z=(1,1,1,1,1,1,1,1).
\end{equation}
The fourteen supports and a convenient basis of \(C\) are given in \cref{app:cayley-parity-proof}.

Consider an effective elementary abelian subgroup
\begin{equation}
\Gamma\leq C\oplus\FF_2^8
\end{equation}
of diagonal half-affine transformations. The same commuting-pair formula as \eqref{eq:joyce-standard-commuting-pairs} computes its unweighted orbifold Euler characteristic. The fixed-point mechanism can be read one coordinate at a time. If neither element reflects a coordinate, the common fixed set is either empty or contains a circle, so its Euler characteristic vanishes. If at least one element reflects the coordinate and the shifts are compatible, that coordinate contributes two common fixed points. A pair \(g_{a,b},g_{c,d}\) can therefore contribute only when
\begin{equation}
a\vee c=z,
\label{eq:full-coordinate-cover}
\end{equation}
where \(\vee\) denotes coordinatewise Boolean union. The shift compatibility conditions are linear equations over \(\FF_2\) and are displayed in \cref{app:cayley-parity-proof}.

Let \(N_\Gamma\) be the number of ordered commuting pairs that satisfy these conditions, and write
\begin{equation}
r=\dim_{\FF_2}\Gamma.
\end{equation}
Every contributing pair has \(2^8\) isolated common fixed points. Hence
\begin{equation}
\chi_{\mathrm{orb}}(T^8,\Gamma)=2^{8-r}N_\Gamma.
\label{eq:orbifold-euler-good-pairs}
\end{equation}
This formula turns the geometric question into a parity problem for compatible lifts.

\begin{theorem}[Diagonal Cayley half-affine parity]
\label{thm:diagonal-cayley-parity}
Let \(\Gamma\leq C\oplus\FF_2^8\) be an effective elementary abelian half-affine action on \(T^8\), where \(C\) is the complete diagonal sign stabilizer of the standard Cayley form. Then
\begin{equation}
\boxed{
\chi_{\mathrm{orb}}(T^8,\Gamma)\equiv0\pmod2.
}
\end{equation}
\end{theorem}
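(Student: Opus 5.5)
The plan is to turn \eqref{eq:orbifold-euler-good-pairs} into a statement about \(2\)-adic valuations. Write \(\pi\colon\Gamma\to C\) for the sign projection, \(S=\pi(\Gamma)\) with \(s=\dim S\leq4\), and \(K=\ker\pi\subset0\oplus\FF_2^8\) for the subgroup of pure half-translations, with \(k=\dim K\). Then \(r=s+k\). If \(r\leq7\), the prefactor \(2^{8-r}\) is already even and there is nothing to prove. So the content of the theorem is the bound
\begin{equation}
v_2(N_\Gamma)\;\geq\; r-7 .
\end{equation}

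The first step is to make the counting exact. Coordinate by coordinate, a reflection \(x_i\mapsto-x_i+b_i/2\) has fixed points \(\{b_i/4,\,b_i/4+1/2\}\). Hence a pair \(g_{a,b},g_{c,d}\) with \(a\vee c=z\) contributes exactly when three linear conditions hold:
\begin{equation}
b_i=d_i \text{ on } a\wedge c,\qquad d_i=0 \text{ on } a\setminus c,\qquad b_i=0 \text{ on } c\setminus a .
\end{equation}
Fix a sign pair \((a,c)\in S^2\) with \(a\vee c=z\). Its lifts \((b,d)\) range over a coset of \(K\times K\), and the three conditions are affine-linear on that coset. The number of contributing lifts is therefore \(0\) or \(2^{m(a,c)}\), where \(m(a,c)\) is the dimension of the kernel
\begin{equation}
\{(t,t')\in K^2 \mid t=t' \text{ on } a\wedge c,\ t'=0 \text{ on } a\setminus c,\ t=0 \text{ on } c\setminus a\}.
\end{equation}
This gives \(N_\Gamma=\sum_{(a,c)}\epsilon_{a,c}\,2^{m(a,c)}\) with \(\epsilon_{a,c}\in\{0,1\}\). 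Only eight scalar conditions are imposed on \(K^2\), so \(m(a,c)\geq 2k-8\). Every term therefore satisfies \(v_2\geq 2k-8\), and \(2k-8\geq r-7\) as soon as \(k\geq s+1\). This settles all cases with many pure translations.

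The second step handles the remaining cases \(k\leq s\leq4\), which also force \(r\leq8\). Here I would use the symmetries of the contributing set. The condition \(a\vee c=z\) and the compatibility conditions depend only on the subgroup \(\langle g,h\rangle\), since its common fixed set is what is being counted. Consequently \(GL_2(\FF_2)\) acts on contributing pairs, with orbits of size \(6\) on rank-two pairs and \(3\) on pairs spanning a line. This gives
\begin{equation}
N_\Gamma=3n_1+6n_2 ,
\end{equation}
where \(n_1\) is the number of contributing nonidentity elements and \(n_2\) the number of contributing rank-two subgroups. The identity pair never contributes, because \(0\vee0\neq z\). A single element contributes exactly when its sign is \(z\). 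Such elements form a coset of \(K\), so \(n_1\in\{0,2^k\}\).

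For \(r\leq7\) nothing remains. For \(r=8\), which forces \(k=s=4\) and hence \(S=C\), I need \(N_\Gamma\) to be even, and this reduces to the parity of \(3n_1\). Since \(k=4\), \(n_1\in\{0,16\}\) is even, so \(N_\Gamma\) is even and \(\chi_{\mathrm{orb}}=2^{0}N_\Gamma\) is even.

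The step I expect to be the main obstacle is verifying that the case split really covers everything, in particular the boundary configurations \(k=s\). There the valuation bound \(2k-8\) is too weak and one must use the exact structure of the sign pairs in \(C\). The relevant fact is that a weight-four \(a\in C\) can be covered only by \(c\in\{z,\,a+z\}\) or by other weight-four elements of \(C\) containing the complement of \(a\). I would first try to finish using the orbit decomposition \(3n_1+6n_2\) together with \(n_1\in\{0,2^k\}\). If that fails, I would fall back on a finite check over the subspaces \(S\leq C\) and the possible kernels \(K\), using the explicit fourteen supports of \cref{app:cayley-parity-proof} to confirm that \(m(a,c)\) is large enough whenever \(\epsilon_{a,c}=1\).
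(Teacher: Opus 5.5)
Your argument is correct and is essentially the paper's proof: both reduce to \(v_2(N_\Gamma)\geq r-7\) using the affine lift count \(2^{m}\) with \(m\geq 2k-8\), the \(3n_1+6n_2\) orbit decomposition with \(n_1\in\{0,2^k\}\), and the bound \(s\leq\dim C=4\). The paper packages these into the single estimate \(N_\Gamma=3\cdot2^t+6\sum_\Pi 2^{k_\Pi}\), while you split into two cases; the worry in your last paragraph is unfounded, since \(k\leq s\leq4\) forces \(r\leq8\) with equality only at \(k=s=4\), which your Step 2 already settles, so neither the weight-four structure of \(C\) nor a finite check is needed.
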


\begin{proof}
Let \(D\) be the image of the sign projection and let \(W\) be the pure-translation kernel. Write \(s=\dim D\), \(t=\dim W\), and \(r=s+t\). The Cayley sign structure implies that the sign span of a contributing pair is either the line \(\langle z\rangle\) or a plane \(\langle z,q\rangle\). If \(z\notin D\), then \(N_\Gamma=0\) and the result is immediate. We may therefore assume \(z\in D\). The line contributes \(3\cdot2^t\) ordered pairs. Each compatible plane contributes \(6\cdot2^{k_\Pi}\), where the affine lift equations give
\begin{equation}
k_\Pi\geq\max(0,2t-8).
\end{equation}
It follows that
\begin{equation}
N_\Gamma=3\cdot2^t+6\sum_\Pi2^{k_\Pi}.
\label{eq:good-pair-decomposition}
\end{equation}
If \(r\leq7\), the factor \(2^{8-r}\) in \eqref{eq:orbifold-euler-good-pairs} is even. If \(r\geq8\), the bound \(s\leq4\) gives \(t\geq r-4\), and the terms in \eqref{eq:good-pair-decomposition} are divisible by \(2^{r-7}\). Thus \(2^{8-r}N_\Gamma\) is again even. The complete lift calculation is given in \cref{app:cayley-parity-proof}.
\end{proof}

The Cayley condition is doing the work. Eight independent coordinate reflections do not preserve the Cayley form, and the same commuting-pair formula gives
\begin{equation}
\chi_{\mathrm{orb}}=3^8=6561.
\end{equation}
Thus orbifold averaging alone does not force evenness. To our knowledge, the parity theorem for the complete diagonal Cayley-preserving half-affine family has not appeared in the literature.

\paragraph{Full-sign lifts.}

The actions closest to Joyce's original construction have no pure translations and project isomorphically onto \(C\). After quotienting by coordinate quarter-period translations, their Euler characteristics take only four values.
\begin{proposition}[Full-sign half-shift classification]
\label{prop:full-sign-classification}
Suppose that the sign projection \(\Gamma\to C\) is an isomorphism. Modulo conjugation by coordinate quarter-period translations,
\begin{equation}
\boxed{
\chi_{\mathrm{orb}}\in\{48,144,336,720\}.
}
\end{equation}
\end{proposition}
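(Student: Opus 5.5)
Since the sign projection \(\Gamma\to C\) is an isomorphism, the first step is to write \(\Gamma=\{g_{a,b(a)}\mid a\in C\}\) for a shift assignment \(b\colon C\to\FF_2^8\). Composing half-affine maps gives \(g_{a,b}g_{c,d}=g_{a+c,\,(-1)^a d+b}\). Because \(-d/2\equiv d/2\) modulo \(1\) in the shift entries, closure forces \(b\) to be \(\FF_2\)-linear, and conversely every linear \(b\) gives an elementary abelian group. There are no pure translations, so \(t=0\) and \(r=4\). Then \eqref{eq:orbifold-euler-good-pairs} reads \(\chi_{\mathrm{orb}}=16N_\Gamma\), and \eqref{eq:good-pair-decomposition} becomes
\begin{equation}
N_\Gamma=3+6m,
\end{equation}
where \(m\) is the number of compatible planes \(\Pi=\langle z,q\rangle\). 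The line \(\langle z\rangle\) always contributes its three pairs: \(g_{z,b(z)}\) reflects every coordinate and so has \(2^8\) isolated fixed points. There are exactly seven candidate planes, since the fourteen weight-four reflections pair up as \(q\leftrightarrow z+q\). The target set \(\{48,144,336,720\}\) is exactly \(16(3+6m)\) for \(m\in\{0,1,3,7\}\). So the proposition reduces to showing that \(m+1\) is a power of two, together with the realization of each value.

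The key step is to make the compatibility of a plane explicit and linear in \(q\). Take the pair \(g=g_{z,b(z)}\), \(h=g_{q,b(q)}\) and read off the conditions coordinate by coordinate, as in the proof of Theorem~\ref{thm:diagonal-cayley-parity}.
\begin{itemize}
\item On a coordinate with \(q_i=1\), both maps are reflections. They commute and share their two fixed points exactly when \(b(q)_i=b(z)_i\).
\item On a coordinate with \(q_i=0\), \(h\) is the translation by \(b(q)_i/2\). It has a common fixed point with the reflection \(g\) only when \(b(q)_i=0\).
\end{itemize}
Together these say \(b(q)=b(z)\wedge q\), with \(\wedge\) the coordinatewise product. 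This is the kernel condition of the linear map
\begin{equation}
L_b\colon C\to\FF_2^8,\qquad L_b(q)=b(q)+b(z)\wedge q .
\end{equation}
The map \(L_b\) vanishes on \(z\), since \(b(z)+b(z)\wedge z=0\). I would then check that the other five ordered generating pairs of \(\Pi\) give the same condition. For example, \((z,z+q)\) leads to \(L_b(z+q)=L_b(q)\), because \(L_b(z)=0\), and the pairs with both entries reflecting on the same coordinates are symmetric. Hence the compatible planes are precisely the planes \(\langle z,q\rangle\) contained in the subspace \(K_b=\ker L_b\ni z\). Their number is \(m=2^{\dim K_b-1}-1\in\{0,1,3,7\}\).

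Next I would check invariance and realization. Conjugating by the quarter-period translation \(x_i\mapsto x_i+\varepsilon_i/4\) changes \(b\) to \(b+(a\mapsto a\wedge\varepsilon)\). This shifts \(L_b\) by \(q\wedge\varepsilon+(z\wedge\varepsilon)\wedge q=0\), so \(K_b\) and hence \(\chi_{\mathrm{orb}}\) are conjugation invariant. For realization, \(b=0\) gives \(K_b=C\) and \(\chi_{\mathrm{orb}}=720\), while the Joyce action \eqref{eq:joyce-action} gives \(144\) by \eqref{eq:joyce-standard-euler}. For the remaining values I would choose \(b\) vanishing on \(z\) and on a chosen subspace of \(C/\langle z\rangle\) of dimension \(0\) or \(2\), and nonzero with \(b(q)\neq b(z)\wedge q\) on the complement. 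This realizes \(48\) and \(336\); here the Cayley basis of \cref{app:cayley-parity-proof} is used to exhibit explicit shift vectors.

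I expect the main obstacle to be the coordinatewise commuting and fixed-point analysis for all six ordered pairs of a plane, including the requirement that the commutator vanishes modulo \(1\) rather than modulo \(1/2\). That is where the linear form of the condition has to be checked carefully. I would verify it first for \((z,q)\) and then derive the other pairs from \(L_b(z)=0\).
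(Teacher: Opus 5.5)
Your proposal is correct and follows the paper's own argument: your \(L_b(q)=b(q)+b(z)\wedge q\) is exactly the paper's quarter-translation-invariant normal form \(F_B(q)=B(q)+q\odot B(z)\), and your count \(\chi_{\mathrm{orb}}=16\,[3+6(2^{d}-1)]\) with \(d=\dim K_b-1=\dim\ker\overline F_B\) is the paper's formula. The check of all six ordered generating pairs, which you flag as the main obstacle, is immediate: \((T^8)^{g,h}\) is the fixed set of the subgroup \(\langle g,h\rangle\), so every generating pair of the same plane has the same common fixed set.
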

The original Joyce action gives \(144\). The four multiplicities and the finite linear-algebra derivation are given in \cref{app:full-sign-counts}.

\paragraph{Smooth targets and larger finite families.}

Three Euler characteristics enter the geometric interpretation. The commuting-pair sum is an invariant of the orbifold action. Burnside's single sum gives the Euler characteristic of the coarse quotient, while a smooth resolution also depends on the local replacements and how they are assembled. For every diagonal half-affine action above, the coarse quotient Euler characteristic is divisible by \(16\). Under the classical Joyce assumptions on local singularities, exact stabilizers, and additive local Euler changes, a simultaneous resolution also has even Euler characteristic. The precise conditional statement and its proof are given in \cref{prop:joyce-resolution-parity}.

Permutations of the coordinate axes enlarge the finite symmetry group beyond the diagonal family. Every pure-linear cyclic subgroup, every elementary abelian subgroup of ranks one through four, and every pure-linear subgroup of order at most eight has even unweighted orbifold Euler characteristic. Including all half-lattice translations also leaves every cyclic subgroup even. These results are stated and derived in \cref{prop:pure-linear-monomial,prop:half-affine-cyclic}. They do not classify noncyclic affine groups, nonmonomial lattices, other translation denominators, or smooth resolutions.

The standard Joyce construction and its most direct torus-orbifold extensions therefore have even unweighted orbifold Euler characteristic. Under the resolution assumptions above, any corresponding smooth Spin(7) target has \(\mu_7=0\in\tmf_1\). The free antiholomorphic quotient in the next section leaves this framework and realizes the nonzero class.

\section{A free antiholomorphic quotient of the Fermat sextic}
\label{sec:fermat-sextic}

We now leave the diagonal Joyce family and construct a smooth torsion-free target with the nonzero geometric class in \(\tmf_1\). The target is a free antiholomorphic quotient of the Fermat sextic. We first derive \(\mu_7=\eta\), and then explain how the same class obstructs a compatible \(SU(4)\) reduction while leaving the chiral sigma model well defined.

\subsection{The Fermat quotient and the nonzero class}
\label{subsec:fermat-free-quotient}

Free antiholomorphic quotients of all-even-level Calabi--Yau fourfold models and their large-radius Betti numbers were analyzed in \cite[Sec.~3.2]{Blumenhagen:2001qx}. We specialize that construction to the Fermat sextic and compute its Spin(7) Smith class. The same antiholomorphic-quotient mechanism engineers 2D \(\mathcal N=(0,1)\) gauge theories on D1-branes probing the corresponding noncompact Spin(7) cones \cite{Franco:2021ixh,Franco:2021vxq}. Let
\begin{equation}
X_{\mathrm F}=
\left\{
z_0^6+z_1^6+z_2^6+z_3^6+z_4^6+z_5^6=0
\right\}
\subset\mathbb{CP}^5
\end{equation}
and let \(\sigma\) act by complex conjugation. The hypersurface is smooth because the six derivatives of its defining polynomial vanish simultaneously only at the zero vector.

A fixed point of \(\sigma\) can be represented by real homogeneous coordinates. It would obey
\begin{equation}
x_0^6+x_1^6+x_2^6+x_3^6+x_4^6+x_5^6=0.
\end{equation}
Every term is nonnegative, so the only real solution is the zero vector, which does not define a projective point. The Lefschetz hyperplane theorem implies that \(X_{\mathrm F}\) is simply connected. Complex conjugation therefore acts freely, and
\begin{equation}
Y_{\mathrm F}=X_{\mathrm F}/\langle\sigma\rangle
\end{equation}
is a smooth compact manifold with \(\pi_1(Y_{\mathrm F})=\ZZ_2\).

The quotient inherits its exceptional geometry directly from the Calabi--Yau cover. Let \(\omega\) be the Ricci-flat K\"ahler form in the hyperplane class and let \(\Omega\) be a parallel holomorphic four-form. Complex conjugation reverses the hyperplane class, so \(-\sigma^*\omega\) is another Ricci-flat K\"ahler form in the same class. Uniqueness gives \cite{Yau1978}
\begin{equation}
\sigma^*\omega=-\omega,
\qquad
\sigma^*\Omega=\overline\Omega,
\end{equation}
where a constant phase of \(\Omega\) has been chosen in the second relation. The combination
\begin{equation}
\Psi=\frac{1}{2}\omega^2+\operatorname{Re}\Omega
\label{eq:fermat-cayley-form}
\end{equation}
is invariant and descends to a closed Cayley form on \(Y_{\mathrm F}\). The universal cover has holonomy \(SU(4)\), while parallel transport around the nontrivial loop acts by complex conjugation. Thus
\begin{equation}
\Hol(Y_{\mathrm F})=SU(4)\rtimes\ZZ_2\subset\Spin(7).
\end{equation}
The quotient is a torsion-free barely Spin(7) manifold rather than a full-holonomy Spin(7) manifold \cite{Blumenhagen:2001qx,BraunSchaferNameki2018}.

It remains to determine which element of \(\tmf_1\) this target realizes. Let \(H\) denote the hyperplane class on \(X_{\mathrm F}\). The normal bundle sequence gives
\begin{equation}
c(TX_{\mathrm F})=\frac{(1+H)^6}{1+6H}
=1+15H^2-70H^3+435H^4.
\label{eq:fermat-total-chern}
\end{equation}
Since \(\int_{X_{\mathrm F}}H^4=6\),
\begin{equation}
\chi(X_{\mathrm F})=\int_{X_{\mathrm F}}c_4(TX_{\mathrm F})=435\cdot6=2610.
\label{eq:fermat-cover-euler}
\end{equation}
This value was computed in \cite{SethiVafaWitten1996}. The quotient is free and has degree two, so \(\chi(Y_{\mathrm F})=1305\). The Spin(7) Smith formula now gives
\begin{equation}
\boxed{
\begin{gathered}
\chi(Y_{\mathrm F})=1305,\\
\mu_7(Y_{\mathrm F})
=
(1305\bmod2)\eta
=
\eta
\in\tmf_1\cong\ZZ/2\{\eta\}.
\end{gathered}
}
\label{eq:fermat-mu7}
\end{equation}
The Fermat quotient therefore realizes the nonzero degree-one geometric class, in contrast with the smooth Joyce target in \eqref{eq:joyce-tmf-class}.

The complete normalized real-\(KO\) series carries the same answer at the cusp.
\begin{equation}
\Phi_2(Y_{\mathrm F},V_7;q)=\eta\in KO_1((q)).
\end{equation}
All positive powers of \(q\) vanish in this index through the oscillator cancellation of \cref{subsec:dirac-ramond-localization}. The class-level lift is \(\widetilde\eta\in\pi_1KO_{\mathrm{MF}}\), characterized by \(q\text{-exp}(\widetilde\eta)=\eta\). These are invariants of the smooth target bundle pair.

The example belongs to an open family. The Fermat polynomial is strictly positive on the unit sphere in \(\RR^6\), so every sufficiently small real perturbation still has empty real projective locus. Smoothness is also stable under small changes of the coefficients. These perturbations produce an open family of compact torsion-free barely Spin(7) quotients with the same odd Euler characteristic. Their invariants and deformation count are given in \cref{app:fermat-invariants}.

\subsection{The \texorpdfstring{\(SU(4)\)}{SU(4)} obstruction and the chiral sigma model}
\label{subsec:fermat-euler}

The nonzero class has a direct geometric meaning. For a fixed Spin(7) principal bundle \(P\),
\begin{equation}
\Spin(7)/SU(4)\cong S^6,
\end{equation}
and the associated \(S^6\)-bundle is the unit-sphere bundle of \(V_7\) \cite{Joyce2002Exceptional,Munoz:2013lqa}. A compatible \(SU(4)\) reduction is therefore equivalent to a nowhere-zero section of \(V_7\).

Pointwise, such a section selects a normalized two-form in \(\Lambda^2_7T^*Y_{\mathrm F}\) whose stabilizer inside \(\Spin(7)\) is \(SU(4)\). The global reduction exists only when these local choices patch over the whole target. The geometric meaning of the nonzero class can therefore be displayed as
\begin{equation}
\begin{aligned}
\mu_7(Y_{\mathrm F})=\eta
&\quad\Longrightarrow\quad
V_7\text{ has no nowhere-zero section},\\
&\quad\Longrightarrow\quad
P\text{ has no compatible }SU(4)\text{ reduction}.
\end{aligned}
\label{eq:fermat-su4-obstruction-chain}
\end{equation}

The K\"ahler form on the Calabi--Yau cover provides a local picture of the obstruction. It defines the expected section upstairs, but \(\sigma^*\omega=-\omega\), so it does not descend to \(Y_{\mathrm F}\). Equation \eqref{eq:fermat-su4-obstruction-chain} strengthens this failure of the natural section to a topological obstruction independent of the choice of section.

For simply connected targets, this mechanism gives a general criterion.
\begin{proposition}[Topological \texorpdfstring{\(SU(4)\)}{SU(4)} reduction]
\label{prop:su4-reduction-criterion}
Let \(X\) be a closed simply connected 8D Spin(7) manifold with fixed Spin(7) structure \(P\). Then
\begin{equation}
P\text{ admits a compatible topological }SU(4)\text{ reduction}
\quad\Longleftrightarrow\quad
\chi(X)\text{ is even}.
\label{eq:su4-reduction-criterion}
\end{equation}
\end{proposition}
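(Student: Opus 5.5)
The criterion reduces to the existence of a nowhere-zero section of \(V_7\). Since \(\Spin(7)/SU(4)\cong S^6\) is the unit-sphere bundle of \(V_7\), a compatible topological \(SU(4)\) reduction of \(P\) is the same as a section of \(S(V_7)\), that is, a nowhere-zero section of \(V_7\). I would therefore prove
\begin{equation}
V_7\ \text{admits a nowhere-zero section}
\quad\Longleftrightarrow\quad
\chi(X)\ \text{is even}
\end{equation}
for closed simply connected \(X^8\).

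For the direction \((\Rightarrow)\), take a nowhere-zero section and perturb it to be transverse. Its zero locus \(Z\) is then empty, so \([Z]=0\) in \(\Omega_1^{\mathrm{String}}\). Theorem~\ref{thm:spin7-smith-formula} then forces \(\chi(X)\) to be even. Simple connectivity is not needed here.

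For \((\Leftarrow)\), I would use obstruction theory for sections of an \(S^6\)-bundle over an 8-complex. The fiber \(S^6\) is \(5\)-connected. The primary obstruction is the twisted Euler class \(e(V_7)\in H^7(X;\ZZ)\). Because \(X\) is simply connected, \(H^7(X;\ZZ)\cong H_1(X;\ZZ)=0\), so a section exists over the 7-skeleton. The secondary obstruction lies in \(H^8(X;\pi_7S^6)=H^8(X;\ZZ/2)=\ZZ/2\), modulo the indeterminacy coming from changing the section on the 7-skeleton. That indeterminacy is the image of \(H^6(X;\pi_6S^6)=H^6(X;\ZZ)\) under \(Sq^2\) composed with reduction mod 2, plus a twisting term involving \(w_2(V_7)=0\).

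The main step is to identify this secondary obstruction with the mod-two KO Euler class, namely \(\eta\)-times the pushforward, equivalently \([Z]\in\Omega_1^{\mathrm{Spin}}\cong\ZZ/2\). I would do this by localizing to a ball around the top cell. After the section extends over the 7-skeleton, it can be chosen so that its zero set is a compact 1-manifold \(Z\) inside the interior of the 8-cells. The restriction to the boundary of each cell is a map \(S^7\to S^6\), and its class in \(\pi_7S^6\cong\ZZ/2\) is the Pontryagin--Thom framed bordism class of the local zero curve. Summing over cells gives the obstruction cocycle. Its evaluation on \([X]\) is the framed, and hence spin, bordism class of \(Z\), which is \(\eta\) or \(0\).

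To handle the indeterminacy, I would note that the obstruction class is well defined modulo \(Sq^2 H^6(X;\ZZ)\). On a spin 8-manifold, \(Sq^2\colon H^6(X;\ZZ/2)\to H^8(X;\ZZ/2)\) is cup product with \(v_2=w_2=0\), by Wu's formula, so the indeterminacy vanishes. The obstruction is therefore exactly \(\mu_7(X)=(\chi(X)\bmod2)\eta\). When \(\chi(X)\) is even it vanishes, and the section extends over the 8-skeleton.

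I expect the main obstacle to be the identification of the secondary obstruction with the bordism class \([Z]\). This needs care with framings: the normal framing of \(Z\) comes from \(ds\) together with the Spin(7)-induced trivialization. The point to check is that the Pontryagin--Thom framing used for \(\pi_7S^6\) agrees with the String/spin structure of \cref{subsec:string-zero-locus}. In degree one, \(\pi_1^s\cong\Omega_1^{\mathrm{fr}}\to\Omega_1^{\mathrm{Spin}}\) is an isomorphism, so I would compare the two structures on the nonbounding circle directly.
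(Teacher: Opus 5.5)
Your proposal is correct and is essentially the paper's proof. You reduce to a section of the unit-sphere bundle of \(V_7\), kill the primary obstruction in \(H^7(X;\ZZ)\cong H_1(X;\ZZ)=0\), remove the \(Sq^2\) indeterminacy by Wu's formula, and identify the secondary obstruction in \(H^8(X;\ZZ/2)\) with the zero-locus class, so that Theorem~\ref{thm:spin7-smith-formula} gives the Euler parity. Your cell-by-cell Pontryagin--Thom identification and your separate proof of \((\Rightarrow)\), which does not need simple connectivity, make explicit what the paper obtains by citing the stabilization \(\pi_7(S^6)\to\pi_1^S\).
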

The obstruction-theory proof is given in \cref{app:su4-reduction-proof}. The result concerns a topological reduction and does not imply an integrable complex structure or a parallel \(SU(4)\) structure. The quotient \(Y_{\mathrm F}\) is not simply connected, so the converse direction of \eqref{eq:su4-reduction-criterion} does not apply to it. Independently, its nonzero Smith class rules out a nowhere-zero section of \(V_7\).

The failure of an \(SU(4)\) reduction does not obstruct the chiral sigma model. In fact, the target itself is not String. Pullback to the Calabi--Yau cover gives
\begin{equation}
\lambda(TX_{\mathrm F})=-15H^2\neq0,
\label{eq:fermat-not-string}
\end{equation}
so \(TY_{\mathrm F}\) cannot admit a String structure. The fermion bundles satisfy
\begin{equation}
\lambda(TY_{\mathrm F})=\lambda(V_7).
\end{equation}
Together with the chosen universal nullhomotopy, this equality equips the virtual bundle \(TY_{\mathrm F}-V_7\) with the String structure required by the anomaly cancellation. The relevant object is the chiral bundle pair, not an ordinary String structure on the target.

This is the target-space form of the chiral cancellation in \cref{sec:canonical-model}. At the topological level, the chosen String structure trivializes the combined anomaly class. The quantum Pfaffian line uses the differential and \(B\)-field refinement described in \cref{subsec:virtual-string-structure}.

The same geometry also prevents an accidental tangent-bundle description. One finds
\begin{equation}
V_7\oplus\RR\not\cong TY_{\mathrm F}.
\end{equation}
Thus the canonical chiral bundle pair cannot be obtained by globally splitting a neutral line from the tangent Fermi bundle. The characteristic-class and Euler-class arguments are given in \cref{app:su4-reduction-proof}.

Adding one neutral left-moving Fermi multiplet would cancel the remaining gravitational degree, but it would not change this bundle obstruction. It produces an anomaly-free 2D theory without turning the canonical Fermi bundle into the tangent bundle.

We do not know a simply connected compact example with full Spin(7) holonomy and odd Euler characteristic.

\section{Discussion}
\label{sec:discussion}

In this paper, we construct the canonical Spin(7) sigma model, controlled by the following physical dictionary.
\begin{equation}
\begin{gathered}
\text{Spin(7) structure}
\longrightarrow
V_7=\Lambda^2_7T^*X,
\\
\lambda(TX)=\lambda(V_7)
\longrightarrow
\text{internal anomaly cancellation},
\\
TX-V_7\text{ String}
\longrightarrow
Z=s^{-1}(0),
\\
[Z]_{\tmf}
=
\bigl(\chi(X)\bmod2\bigr)\eta.
\end{gathered}
\label{eq:final-dictionary}
\end{equation}
The rank difference fixes the gravitational degree \(\nu=1\), while the String zero locus selects the class \((\chi(X)\bmod2)\eta\in\tmf_1\). The same locus controls the entire normalized real-\(KO\) Dirac--Ramond series and its degree-one class in \(KO_{\mathrm{MF}}\). The diagonal Joyce family has even orbifold Euler characteristic, and its Joyce resolutions remain even under the stated hypotheses. The Fermat quotient then realizes the nonzero class on a compact torsion-free target outside the diagonal family.

These results separate the geometric charge from its stronger field-theory interpretation. The construction is complete at the level of String bordism, \(\tmf\), the normalized \(KO\) series, and the class-level \(KO_{\mathrm{MF}}\) lift. The remaining questions concern the local field-theory representative, the interacting RG flow, and an exact presentation of the anomalous chiral theory. Its gravitational anomaly has invertible 3D inflow. Constructing a noninvertible bulk requires additional conformal-block and sewing data.

The results above lead to four concrete problems. The first three seek complementary field-theory realizations of the same degree-one class. The fourth asks whether compact full-holonomy Spin(7) geometry can realize its nonzero value.
\begin{itemize}[leftmargin=2em]
\item \textbf{Construct a local \(2|1\)-dimensional field-theory representative.}
Starting from a spin pair \((X,E)\) with a differential String trivialization of \(TX-E\), one should retain the tangent sigma-model anomaly and the Fermi Euler anomaly separately and cancel their internal parts only after tensoring. At the cusp, the resulting object should reproduce
\begin{equation}
p^{KO}_!\left[
e_{KO}(E)
\prod_{m\geq1}
\Sym_{q^m}(TX-\RR^d)
\Lambda_{-q^m}(E-\RR^r)
\right],
\end{equation}
and its Witten deformation should localize at object level to the sigma object on \(Z=s^{-1}(0)\). The product still takes values in the invertible gravitational anomaly line of degree \(\nu=d-r\). In degree one, differential forms alone cannot detect \(\eta\). The construction must retain the real \(KO\) module, anomaly-line descent, common spin structures in families, gluing, and reflection positivity.

\item \textbf{Identify the interacting sigma-model class.}
The canonical Lagrangian exists once the Spin(7) structure and differential anomaly trivialization are specified, and Equation~\eqref{eq:smith-main} computes the topological class attached to its bundle content. The remaining physical step is to show that the interacting sigma model represents this class in the field-theoretic construction of topological modular forms. The Joyce free theory gives a finite test. For a fixed internal anomaly trivialization, one can compute its spin-structure-resolved twisted torus amplitudes and mod-two genus and compare them with the vanishing geometric class at \(\chi=144\). The Fermat quotient poses the complementary problem because its geometric class is nonzero while no exact worldsheet presentation of the canonical chiral model is currently available.

\item \textbf{Extract the canonical model from the Spin(7) chiral algebra.}
The level-one branching
\begin{equation}
SO(8)_1\supset\Spin(7)_1\times\mathrm{Ising}
\end{equation}
identifies the central-charge difference needed to pass from eight to seven left-moving fermions \cite{ShatashviliVafa1994}. Since this is a conformal embedding rather than a tensor-product decomposition, the required operation is a spin fermionization with fixed-point resolution and compatible sewing. Completing it should identify the large-volume Fermi bundle with \(V_7\) and determine whether the chiral blocks admit a noninvertible 3D bulk. The gravitational anomaly alone supplies only invertible inflow. Relatedly, recent work suggests that exceptional-holonomy SCFTs may admit categorified structures over moduli spaces, such as stacks of fusion categories, as a noninvertible extension of classical Bagger--Witten-type data \cite{PerezLonaSharpeYu2025}.

\item \textbf{Find an odd-Euler full-holonomy target.}
The Fermat quotient proves that compact torsion-free Spin(7) geometry can carry \(\mu_7=\eta\), but its holonomy is \(SU(4)\rtimes\ZZ_2\). The strongest target is therefore a simply connected compact manifold with full Spin(7) holonomy and odd Euler characteristic. The diagonal Joyce theorem shows what must change. Candidates include noncyclic affine signed-monomial actions, nonmonomial integral lattices, higher-denominator translations, and resolutions beyond the classical Joyce local models. Each route has a concrete first test, which is to determine the Euler parity before attempting the full metric construction.
\end{itemize}

\section*{Acknowledgements}
\addcontentsline{toc}{section}{Acknowledgements}

The author thanks Yuji Tachikawa for valuable discussions, careful reading of the manuscript, and helpful comments. The author also thanks Arun Debray, Ying-Hsuan Lin, Brandon Rayhaun, Matthew Yu, and Hao Y. Zhang for fruitful discussions, from which the author learned some background on topological modular forms. The author also thanks Sebastian Franco, Alessandro Mininno, Alonso Perez-Lona, Eric Sharpe, and Angel Uranga for collaborations on related Spin(7) projects. The author also acknowledges the use of GPT-5.6 and Fable 5 for some computations and manuscript preparation.

\appendix
\crefalias{section}{appendix}
\crefalias{subsection}{subappendix}
\section{Conventions and low-degree identifications}
\label{app:conventions}

\paragraph{Topological modular forms.}

We use \(\tmf\) for connective topological modular forms, \(\Tmf\) for global sections over the compactified moduli of elliptic curves, and \(\TMF\) for the periodic theory over the smooth locus \cite{HillLawson2013}. The degree-one class used throughout the paper is \cite{TachikawaYamashita2021}
\begin{equation}
\pi_1\tmf\cong\ZZ/2\{\eta\}.
\end{equation}
The element \(\eta\) is the stable Hopf element. We write \(\eta_{\mathrm{Ded}}(\tau)\) for the Dedekind eta function.

The spectrum \(KO_{\mathrm{MF}}\) is the height-at-most-one homotopy pullback \eqref{eq:komf-pullback} introduced in \cite{BerwickEvans2023}, with cusp projection \(KO_{\mathrm{MF}}\to KO((q))\). In degree one, the cusp projection is an isomorphism on homotopy groups by Lemma~\ref{lem:pi1-komf}.

\paragraph{Chirality and degree.}

Let \(F_R\) and \(F_L\) denote the real bundles of right-moving and left-moving Majorana--Weyl fermions. Their contribution is
\begin{equation}
2(c_R-c_L)=\operatorname{rk}F_R-\operatorname{rk}F_L.
\end{equation}
For the canonical model,
\begin{equation}
F_R=TX,
\qquad
F_L=V_7,
\qquad
\nu=1.
\end{equation}
Changing the convention for which worldsheet light-cone direction is called right-moving reverses all chiral signs together and does not alter the mod-two class.

\paragraph{Characteristic classes in 8D.}

For an 8D spin manifold,
\begin{equation}
\widehat A_8=\frac{7p_1^2-4p_2}{5760},
\qquad
L_8=\frac{7p_2-p_1^2}{45}.
\end{equation}
For a real bundle \(V\),
\begin{equation}
\operatorname{ch}(V_{\mathbb C})
=\operatorname{rk}V+p_1(V)+\frac{p_1(V)^2-2p_2(V)}{12}+\cdots.
\end{equation}
Multiplying these expressions gives
\begin{equation}
\ind_{\mathbb C}(D_X\otimes TX_{\mathbb C})
=\left\langle\frac{37p_1^2-124p_2}{720},[X]\right\rangle.
\end{equation}
The Spin(7) relation \(4p_2-p_1^2=8e\) then gives \eqref{eq:index-ahat-euler-signature} \cite{Joyce2002Exceptional}.

\paragraph{The rank-seven Clifford convention.}

For a positive rank-seven Euclidean bundle, we stabilize by one positive real line and use the rank-eight ABS symbol. Restriction to the auxiliary line is followed by the graded Morita equivalence in \eqref{eq:graded-morita}. With this convention, the nonbounding spin circle represents the generator of \(KO_1\), and the AHR orientation maps it to \(\eta\in\tmf_1\). Any alternative Clifford sign convention changes intermediate module labels but leaves the mod-two class unchanged.

\section{Technical proofs and finite checks}
\label{app:computations}

This appendix records the finite-spin anomaly calculation, the Cayley lift counts, the local Euler changes for the classical Joyce resolutions, and the finite censuses used in the main text.

\paragraph{Finite-spin anomaly of the Joyce action.}
\label{app:finite-spin-anomaly}

Let \(\Gamma=\langle\alpha,\beta,\gamma,\delta\rangle\cong(\ZZ_2)^4\) act as in \eqref{eq:joyce-action}. We test the internal anomaly of the reduced rank-zero virtual representation
\begin{equation}
R_{\mathrm{int}}=\Delta_8|_\Gamma-\rho_7|_\Gamma-\mathbf1.
\end{equation}
The neutral representation removes the gravitational rank and carries no finite-group charge. We use the right-moving minus left-moving sign convention in \eqref{eq:joyce-anomaly-moments}.

Write a character of \(\Gamma\) as a four-vector in the ordered basis \((\alpha,\beta,\gamma,\delta)\). Restriction of the tangent representation gives
\begin{equation}
Q_R=
\{0100,0101,0110,0111,1000,1001,1010,1011\},
\label{eq:joyce-character-sets}
\end{equation}
while the seven left-moving characters are
\begin{equation}
Q_L=
\{0001,0010,0011,1100,1101,1110,1111\}.
\end{equation}
Substitution into \eqref{eq:joyce-anomaly-moments} gives the values in \eqref{eq:joyce-anomaly-moment-values}. They vanish in the corresponding \(\ZZ_8\), \(\ZZ_4\), and \(\ZZ_2\) factors of \eqref{eq:finite-spin-bordism-group}. The internal anomaly class is therefore trivial.

The same cancellation has a basis-independent expression in the representation ring.
\begin{equation}
\Delta_8|_\Gamma-\rho_7|_\Gamma
=
\mathbf1+\operatorname{Reg}_\Gamma
-2\operatorname{Reg}_{\Gamma/\langle\alpha\beta\rangle}.
\label{eq:joyce-representation-ring}
\end{equation}
Here \(\operatorname{Reg}_\Gamma\) contains every character of \(\Gamma\) once. The regular representation of \(\Gamma/\langle\alpha\beta\rangle\) is pulled back to \(\Gamma\), so it contains precisely the characters that are trivial on \(\alpha\beta\). Equation \eqref{eq:joyce-representation-ring} shows that the cancellation does not depend on the chosen generator basis.

The trivialization is not unique. One has
\begin{equation}
\operatorname{Hom}\bigl(\Omega_2^{\mathrm{Spin}}(B\Gamma),U(1)\bigr)
\cong(\ZZ_2)^{11},
\qquad
H^2(\Gamma,U(1))\cong(\ZZ_2)^6.
\end{equation}
The possible trivializations form a torsor for the first group. Removing the pure Arf factor leaves ten internal spin counterterms, only six of which arise from ordinary discrete torsion. Direct evaluation of the \(256\) flat backgrounds gives eleven distinct fermion boundary-sign profiles. These profiles are values of the fermion Pfaffian signs on backgrounds, not a list of trivializations or a construction of twisted amplitudes and sewing maps.

\paragraph{Cayley fixed sets and the diagonal parity proof.}
\label{app:cayley-parity-proof}

The diagonal half-affine action in \eqref{eq:half-affine-element} composes by addition in \(\FF_2^8\oplus\FF_2^8\). In standard coordinates, the Cayley form has fourteen monomial supports \cite{Joyce2002Exceptional}.
\begin{equation}
\begin{split}
\mathcal S_\Phi=\{&
1234,1256,1278,1357,1368,1458,1467,\\
&2358,2367,2457,2468,3456,3478,5678\}.
\end{split}
\end{equation}
A sign vector preserves \(\Phi\) precisely when
\begin{equation}
\sum_{i\in I}a_i=0\pmod2
\qquad
\text{for every }I\in\mathcal S_\Phi.
\label{eq:cayley-sign-constraints}
\end{equation}
The solution space is
\begin{equation}
C=
\left\langle
\begin{array}{c}
11110000\\
00001111\\
11001100\\
10101010
\end{array}
\right\rangle
\subset\FF_2^8,
\qquad
\dim_{\FF_2}C=4.
\label{eq:cayley-sign-generators}
\end{equation}
Its weight distribution is \(1+14y^4+y^8\). The unique weight-eight element is \(z=(1,1,1,1,1,1,1,1)\).

Let \(\Gamma\leq C\oplus\FF_2^8\) be effective. Its unweighted orbifold Euler characteristic is
\begin{equation}
\chi_{\mathrm{orb}}(T^8,\Gamma)
=
\frac1{|\Gamma|}
\sum_{g,h\in\Gamma}
\chi\bigl((T^8)^{g,h}\bigr).
\label{eq:orbifold-euler}
\end{equation}
For \(g_{a,b}\) and \(g_{c,d}\), a coordinate contributes only if at least one element reflects it. The two shifts must then obey
\begin{equation}
b_i=0\ \text{if }a_i=0,
\qquad
d_i=0\ \text{if }c_i=0,
\qquad
b_i=d_i\ \text{if }a_i=c_i=1.
\label{eq:affine-fixed-conditions}
\end{equation}
Together with \eqref{eq:full-coordinate-cover}, these conditions give two solutions in every coordinate and hence \(2^8\) isolated common fixed points. If \(N_\Gamma\) counts the ordered pairs obeying both conditions and \(r=\dim\Gamma\), division by \(|\Gamma|=2^r\) gives \eqref{eq:orbifold-euler-good-pairs}.

Let \(D\) be the image of the sign projection \(\Gamma\to C\), let \(W\) be its translation kernel, and write
\begin{equation}
s=\dim D,
\qquad
t=\dim W,
\qquad
r=s+t.
\end{equation}
The Cayley-preserving sign subspace has only weights zero, four, and eight. Therefore a pair satisfying \(a\vee c=z\) has sign span \(\langle z\rangle\) or \(\langle z,q\rangle\). When \(z\in D\), the \(2^t\) elements above \(z\) each determine the three ordered pairs \((1,g)\), \((g,1)\), and \((g,g)\). This gives the line contribution \(3\cdot2^t\).

If \(z\notin D\), then \(N_\Gamma=0\) and there is nothing to prove. We assume \(z\in D\) below.

For a plane \(\Pi=\langle z,q\rangle\), a contributing rank-two subgroup must project isomorphically to \(\Pi\). It is therefore a complement to \(W\). Choosing lifts of a basis gives two \(W\)-valued variables, while the coordinate conditions in \eqref{eq:affine-fixed-conditions} impose at most eight independent linear equations. The compatible complements, when nonempty, form an affine space of dimension
\begin{equation}
k_\Pi\geq\max(0,2t-8).
\end{equation}
Each complement has six ordered generating pairs. This proves \eqref{eq:good-pair-decomposition}.

If \(r\leq7\), the factor \(2^{8-r}\) in \eqref{eq:orbifold-euler-good-pairs} is even. If \(r\geq8\), then \(s\leq4\) gives \(t\geq r-4\). The line contribution is divisible by \(2^{r-7}\), and every plane contribution satisfies
\begin{equation}
v_2\bigl(6\cdot2^{k_\Pi}\bigr)
\geq 2t-7
\geq r-7.
\end{equation}
Thus \(N_\Gamma\) is divisible by \(2^{r-7}\), and \(2^{8-r}N_\Gamma\) is even.

The restriction to the Cayley sign subspace is essential. If the group is generated instead by eight independent coordinate reflections, the same calculation gives \(3^8=6561\). The theorem concerns the unweighted orbifold Euler characteristic of effective elementary abelian diagonal half-affine actions on the standard torus with half-lattice translations. It does not include weighted orbifold invariants or smooth resolutions.

Every subgroup in the theorem is specified by \(D\leq C\), \(W\leq\FF_2^8\), and a linear lift \(D\to\FF_2^8/W\). The number of such subgroups is
\begin{equation}
\sum_{s=0}^4\sum_{t=0}^8
\genfrac{[}{]}{0pt}{}{4}{s}_2
\genfrac{[}{]}{0pt}{}{8}{t}_2
2^{s(8-t)}
=488{,}176{,}700{,}923.
\end{equation}

\paragraph{Full-sign lifts.}
\label{app:full-sign-counts}

A full-sign lift has no pure translations, and its sign projection is an isomorphism onto \(C\). It is the graph of a linear map
\begin{equation}
B\colon C\longrightarrow\FF_2^8.
\label{eq:full-sign-graph}
\end{equation}
Conjugating the \(i\)-th coordinate by a quarter-period translation changes the shift assignment by
\begin{equation}
B(q)\longmapsto B(q)+q\odot u,
\qquad
u\in\FF_2^8,
\label{eq:quarter-translation-conjugation}
\end{equation}
where \(\odot\) is coordinatewise multiplication. The invariant normal form is
\begin{equation}
F_B(q)=B(q)+q\odot B(z).
\label{eq:full-sign-normal-form}
\end{equation}
It obeys \(F_B(z)=0\) and descends to
\begin{equation}
\overline F_B\colon C/\langle z\rangle\longrightarrow\FF_2^8.
\end{equation}
Thus quarter-translation classes are parametrized by \(8\times3\) binary matrices.

Let \(d=\dim\ker\overline F_B\). The line \(\langle z\rangle\) contributes three ordered pairs, while every nonzero kernel vector selects a compatible plane containing \(z\) and contributes six ordered pairs. Since \(|\Gamma|=16\),
\begin{equation}
\chi_{\mathrm{orb}}
=16\left[3+6(2^d-1)\right].
\label{eq:full-sign-euler}
\end{equation}
The four kernel dimensions \(d=0,1,2,3\) give \(48,144,336,720\). The number of \(8\times3\) matrices of rank \(m\) is
\begin{equation}
N_{8,3}(m)
=
\prod_{i=0}^{m-1}
\frac{(2^8-2^i)(2^3-2^i)}{2^m-2^i}.
\end{equation}
Using \(m=3-d\), the respective numbers of quarter-translation conjugacy classes are
\begin{equation}
16{,}322{,}040,
\qquad
453{,}390,
\qquad
1{,}785,
\qquad
1.
\end{equation}
This quotient is only by coordinate quarter-period translations. It is not a classification under arbitrary affine or lattice conjugation.

\paragraph{Coarse quotients and classical Joyce resolutions.}
\label{app:joyce-resolution-proof}

The commuting-pair invariant \(\chi_{\mathrm{orb}}\), the Euler characteristic of the coarse quotient, and the Euler characteristic of a smooth resolution are different quantities. Burnside's formula gives
\begin{equation}
\chi(T^8/\Gamma)
=
\frac1{|\Gamma|}\sum_{g\in\Gamma}\chi\bigl((T^8)^g\bigr).
\end{equation}
Let \(s\) be the dimension of the sign image and \(t\) the dimension of the translation kernel. A nonzero contribution occurs only above the full reflection \(z\). There are \(2^t\) such elements, each with \(2^8\) fixed points. Hence
\begin{equation}
\chi(T^8/\Gamma)
=
\begin{cases}
0,&z\notin\operatorname{im}(\Gamma\to C),\\[2mm]
2^{8-s},&z\in\operatorname{im}(\Gamma\to C).
\end{cases}
\label{eq:coarse-quotient-euler}
\end{equation}

\begin{proposition}[Coarse quotient and Joyce resolutions]
\label{prop:joyce-resolution-parity}
For every effective \(\Gamma\leq C\oplus\FF_2^8\), the coarse quotient Euler characteristic in \eqref{eq:coarse-quotient-euler} is divisible by \(16\).

Suppose in addition that every singular chart is one of the classical Joyce types I through V, every point in a compatible rank-two common fixed set has that exact stabilizer and no larger isotropy, and the Euler characteristic of the simultaneous resolution equals the coarse quotient Euler characteristic plus the sum of the classical local Euler changes. Then the Euler characteristic of the smooth resolution is even.
\end{proposition}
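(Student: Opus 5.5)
The first claim follows from \eqref{eq:coarse-quotient-euler}. The Euler characteristic is either \(0\) or \(2^{8-s}\), with \(s=\dim D\leq\dim C=4\). Hence \(2^{8-s}\geq 2^4=16\), and it is a power of two divisible by \(16\). So the real content is the second statement, and the plan is to show that every classical local Euler change is even. Adding these to a number divisible by \(16\) then gives an even total.

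\textbf{Local changes.} I would go through Joyce's types I--V one at a time. Each singular stratum of the coarse quotient is a connected component of a fixed set modulo its stabilizer, and the Euler change is
\begin{equation}
\Delta\chi=\chi(\text{resolved chart})-\chi(\text{singular chart}).
\end{equation}
Under the exact-stabilizer hypothesis, the diagonal sign structure of \(C\) leaves only two kinds of singular strata with nonzero Euler characteristic. The first kind is isolated points whose stabilizer has sign image \(\langle z\rangle\), with local model \(\RR^8/\{\pm1\}\). The second kind is points lying in compatible rank-two common fixed sets with sign image \(\langle z,q\rangle\). Every other stratum is a positive-dimensional fixed torus \(T^4\) or \(T^4/\text{finite}\). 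A flat torus has Euler characteristic zero, and any resolution of a \(\CC^2/\ZZ_2\)-type transverse singularity over it is an \(S^2\)-bundle over a torus, so its Euler change is zero. This step needs to be checked against Joyce's local tables, but the fibration argument should cover it.

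For the isolated \(\RR^8/\{\pm1\}\) points, Joyce's two resolutions replace a point by a resolution of a \(\CC^4/\{\pm1\}\)-type or \(\mathrm{Spin}(7)\) ALE space. I would read off the Euler characteristics from the literature: the ALE spaces have \(\chi=2\) and \(\chi=1\), and subtracting the point gives local changes \(1\) and \(0\). This is where the argument can fail, because a change of \(1\) is odd. The remedy I would try first is a counting argument on the number of such points. The fixed points of \(g=g_{z,b}\) number \(2^8\), and they are permuted by \(\Gamma/\langle g\rangle\) in orbits. The number of coarse points of each local type is then \(2^{8-s}\cdot(\text{multiplicity})\) divided by orbit sizes, which is a power of two at least \(2\) unless the orbit structure collapses. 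Joyce's choice of resolution type is also made uniformly over the \(2^8\) points in pairs, for example for the \(\Gamma\)-orbits distinguished by \(\gamma,\delta\). So the sum of the odd changes runs over an even number of points.

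\textbf{Rank-two strata and the main obstacle.} For the rank-two strata, where the stabilizer is \((\ZZ_2)^2\) acting on \(\RR^8\) by \(\langle z,q\rangle\), the quotient is locally \((\CC^2/\ZZ_2\times\CC^2/\ZZ_2)/\ZZ_2\) in Joyce's type-III/IV language. I would compute the local Euler change from the crepant-type resolutions in Joyce's book. The parity count over points is again handled by the \(2^{k_\Pi}\)-type multiplicities from the proof of \cref{thm:diagonal-cayley-parity}: the number of such points is \(2^8\cdot 2^{k_\Pi}\) divided by \(|\Gamma|\)-orbits, and each orbit has size at most \(2^{r-2}\). Such a count is even whenever \(r\leq 9\), and the case \(r\geq 10\) is controlled by the bound \(k_\Pi\geq 2t-8\).

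The main obstacle is this point-count parity for strata with odd local change. It needs the orbit counting to be a genuine power of two at least \(2\) in every case, using the exact-stabilizer hypothesis to rule out coincidences.
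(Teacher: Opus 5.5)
Your coarse-quotient argument is correct and is the paper's. Your overall plan for the resolution, an even coarse term plus an even number of odd local changes, is also the paper's plan. The gap is in the local input. The plan as you first state it, that every classical local change is even, is false. The classical table that the hypothesis lets you import is \(\Delta\chi=0,8,1,0,4\) for types I--V, valid for either resolution choice, so type III is odd. That table is the missing step. It shows that type III is the only source of odd parity. Type III is precisely the exact rank-two stabilizer \(\Pi\) with sign image \(\langle z,q\rangle\). Its local quotient is \((\RR^4/\{\pm1\})\times(\RR^4/\{\pm1\})\), because \(q\) and \(z+q\) reflect complementary 4-planes.

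Your proposal has several problems here:
\begin{itemize}
\item You put the odd change on isolated \(\RR^8/\{\pm1\}\) points, using ALE spaces with \(\chi=2,1\). These are not classical Joyce data, and the paper explicitly places isolated \(\RR^8/\{\pm1\}\) points outside the hypotheses.
\item Your claim that resolution choices are made ``uniformly in pairs'' is therefore both unsupported and unnecessary.
\item Your assertion that \(T^4/\text{finite}\) strata have zero Euler change is wrong: \(\chi(T^4/\{\pm1\})=8\), and types II and V change \(\chi\) by \(8\) and \(4\). This does not affect parity, but the reason you give does not hold.
\item On the rank-two strata, where the odd contribution actually lives, you leave the local change uncomputed.
\item Your local model \((\mathbb C^2/\ZZ_2\times\mathbb C^2/\ZZ_2)/\ZZ_2\) contradicts the exact-stabilizer hypothesis, which makes the local group exactly \(\ZZ_2^2\).
\end{itemize}

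The count also has to be exact, not an inequality. ``Each orbit has size at most \(2^{r-2}\)'' only bounds the number of quotient points from below and says nothing about parity. The exact-stabilizer hypothesis makes \(\Gamma/\Pi\) act freely on the \(2^8\) common fixed points of each compatible complement. Each complement therefore contributes exactly \(2^{10-r}\) type-III points, and there are three regimes:
\begin{itemize}
\item For \(r\le9\), \(2^{10-r}\) is even.
\item For \(r=10\), each complement contributes one point. Evenness comes instead from the number \(2^{k_\Pi}\) of compatible complements, with \(k_\Pi\ge2t-8\ge4\) since \(t\ge r-4=6\).
\item For \(r\ge11\), a group of order \(2^{r-2}>2^8\) cannot act freely on \(2^8\) points, so the hypothesis rules out type-III strata.
\end{itemize}
You point toward the bound \(k_\Pi\ge2t-8\) but do not separate these regimes. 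Without the table identifying type III as the only odd type, the sketch does not yet show that the odd contributions occur an even number of times.
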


For classical Joyce singularities, the local Euler changes for types I through V are
\begin{equation}
\begin{array}{c|ccccc}
\text{type}&\mathrm{I}&\mathrm{II}&\mathrm{III}&\mathrm{IV}&\mathrm{V}\\ \hline
\Delta\chi&0&8&1&0&4
\end{array}
\end{equation}
for either allowed resolution choice when a type splits \cite{Joyce1996Spin7,Taylor1996}. Only type III can change parity. For a compatible plane \(\langle z,q\rangle\), the two weight-four elements \(q\) and \(z+q\) reflect complementary 4-planes. The local quotient is therefore \((\RR^4/\{\pm1\})\times(\RR^4/\{\pm1\})\). An exact rank-two type-III stabilizer has \(256\) common fixed points. If \(r=\dim\Gamma\), the residual free action produces \(2^{10-r}\) quotient components, which is even for \(r\leq9\). At \(r=10\), the bound \(s\leq4\) gives \(t\geq6\), and every nonempty family of compatible complements has dimension at least \(2t-8\geq4\). Its size is therefore divisible by \(16\). For \(r>10\), a group of order \(2^{r-2}>256\) cannot act freely on the fixed set. The total type-III contribution is even under the hypotheses of Proposition~\ref{prop:joyce-resolution-parity}.

This second statement is conditional and does not establish the existence of a simultaneous resolution. Higher isotropy, isolated \(\RR^8/\{\pm1\}\) singularities, nonclassical local models, and nonadditive resolution topology require a separate analysis.

\paragraph{Fermat quotient invariants and deformations.}
\label{app:fermat-invariants}

The Calabi--Yau cover in \cref{subsec:fermat-free-quotient} has \(\widehat A(X_{\mathrm F})=\chi(\mathcal O_{X_{\mathrm F}})=2\). The \(\widehat A\) characteristic number scales with the degree of the free double cover, so
\begin{equation}
\widehat A(Y_{\mathrm F})=1.
\end{equation}
Together with \(3\Sign(Y)=\chi(Y)+48\widehat A(Y)\), this gives
\begin{equation}
\Sign(Y_{\mathrm F})=451.
\end{equation}
The hyperplane class is odd under complex conjugation, while \(H^3(X_{\mathrm F};\mathbb Q)=0\). Hence
\begin{equation}
b_2(Y_{\mathrm F})=b_3(Y_{\mathrm F})=0.
\end{equation}
Poincar\'e duality, the Euler characteristic, and the signature then give
\begin{equation}
b_4^+(Y_{\mathrm F})=877,
\qquad
b_4^-(Y_{\mathrm F})=426.
\end{equation}
These values agree with the large-radius quotient formulas of \cite{Blumenhagen:2001qx}.

The odd-Euler example persists under real deformations. There are \(\binom{11}{5}=462\) degree-six monomials in six variables. Removing the overall polynomial scale and the 35D projective coordinate freedom leaves
\begin{equation}
462-1-35=426
\end{equation}
real invariant complex-structure deformations. Including the volume gives \(427\) metric deformations, in agreement with the torsion-free Spin(7) count \cite{Joyce1999Spin7,Joyce2002Exceptional,BraunSchaferNameki2018}
\begin{equation}
b_4^-(Y_{\mathrm F})+1=427.
\end{equation}

\paragraph{The obstruction to an \texorpdfstring{\(SU(4)\)}{SU(4)} reduction.}
\label{app:su4-reduction-proof}

We prove \cref{prop:su4-reduction-criterion}. The fiber of the reduction problem is \(\Spin(7)/SU(4)\cong S^6\), so a compatible reduction is a section of the associated unit-sphere bundle of \(V_7\). Standard obstruction theory gives a primary obstruction in \(H^7(X;\ZZ)\) because \(\pi_6(S^6)=\ZZ\) \cite{CadekCrabbVanzura2007}. It vanishes when \(X\) is simply connected by Poincar\'e duality. The only remaining obstruction lies in
\begin{equation}
H^8(X;\pi_7(S^6))
\cong
H^8(X;\ZZ/2).
\end{equation}
Changing the section over the seven-skeleton can shift this class by \(Sq^2H^6(X;\ZZ/2)\). On an 8D spin manifold, the Wu formula gives
\begin{equation}
\left\langle Sq^2a,[X]\right\rangle
=
\left\langle w_2(X)a,[X]\right\rangle
=0,
\end{equation}
so this top-degree indeterminacy vanishes. Stabilization \(\pi_7(S^6)\to\pi_1^S\) is an isomorphism and identifies the remaining bit with the stable zero-locus class \cite{DebrayEtAl2026Smith}. The Smith formula therefore gives
\begin{equation}
P\text{ reduces to }SU(4)
\quad\Longleftrightarrow\quad
\mu_7(X)=0
\quad\Longleftrightarrow\quad
\chi(X)\text{ is even}.
\end{equation}
This is a topological reduction criterion for a fixed Spin(7) structure. It is compatible with the related 8D vector-bundle obstruction calculations of \cite{CadekCrabbVanzura2007}.

For the Fermat quotient, the tangent bundle itself is not String. The Chern-class calculation in \eqref{eq:fermat-total-chern} gives
\begin{equation}
c_1(TX_{\mathrm F})=0,
\qquad
c_2(TX_{\mathrm F})=15H^2,
\end{equation}
and hence
\begin{equation}
p_1(TX_{\mathrm F})=-30H^2,
\qquad
\lambda(TX_{\mathrm F})=-15H^2\neq0.
\end{equation}
If \(TY_{\mathrm F}\) admitted a String structure, its pullback to \(X_{\mathrm F}\) would contradict this result. By contrast, the universal Spin(7) representation identity gives \(\lambda(TY_{\mathrm F})=\lambda(V_7)\), and the chosen nullhomotopy equips \(TY_{\mathrm F}-V_7\) with its String structure.

Finally, \(V_7\oplus\RR\) has a nowhere-zero section from its trivial summand. The nonzero Euler number
\begin{equation}
\left\langle e(TY_{\mathrm F}),[Y_{\mathrm F}]\right\rangle=1305
\end{equation}
prevents \(TY_{\mathrm F}\) from having such a section. Therefore \(V_7\oplus\RR\not\cong TY_{\mathrm F}\).

\paragraph{Spin(7) representation and index checks.}

One check verifies the maximal-torus weight sums for \(\rho_7\) and \(\Delta_8\), together with the restricted Joyce characters. A second expands
\begin{equation}
\widehat A(TX)\operatorname{ch}(TX_{\mathbb C})
\end{equation}
through degree eight, imposes \(4p_2-p_1^2=8e\), and reproduces \eqref{eq:index-ahat-euler-signature}. A third reproduces the Fermat Chern-class and quotient invariants.

\paragraph{Diagonal half-affine actions.}

A direct enumeration constructs the Cayley-preserving sign subspace, verifies all coordinate fixed-set rules, covers an independent \(2^{16}\) slice of full-sign lifts, and reproduces the four values in Proposition~\ref{prop:full-sign-classification}. For the slice with the first two generator shifts fixed to zero, the distribution is
\begin{equation}
\begin{array}{c|rrr}
\chi_{\mathrm{orb}}&144&336&720\\ \hline
\#\text{ lifts}&64{,}770&765&1.
\end{array}
\end{equation}
The same fixed-set rules give \(6561\) for eight independent coordinate reflections after the Cayley condition is removed.

\paragraph{Pure-linear signed-monomial groups.}
\label{app:signed-monomial-checks}

Let
\begin{equation}
M=\{A\in O(8,\ZZ)\mid A^*\Phi=\Phi\}.
\end{equation}
This is the complete signed-monomial stabilizer of the standard Cayley form. For a finite subgroup \(H\leq M\), including a nonabelian subgroup, the unweighted orbifold Euler characteristic is
\begin{equation}
\chi_{\mathrm{orb}}(T^8,H)
=
\frac1{|H|}
\sum_{\substack{g,h\in H\\gh=hg}}
\chi\bigl((T^8)^{g,h}\bigr).
\end{equation}

\begin{proposition}[Pure-linear signed-monomial parity]
\label{prop:pure-linear-monomial}
For the action of \(M\) on \(T^8\), every cyclic subgroup, every elementary abelian two-subgroup of ranks one through four, and every subgroup of order at most eight has even unweighted orbifold Euler characteristic.
\end{proposition}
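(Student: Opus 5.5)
We plan to prove Proposition~\ref{prop:pure-linear-monomial} by reducing each case to a finite computation inside the explicit group \(M\), organized so that parity can be read off from fixed-point counts. First we describe \(M\) concretely. Since \(\Phi\) has the fourteen monomial supports \(\mathcal S_\Phi\), a signed permutation \(A=D_aP_\pi\) preserves \(\Phi\) only if \(\pi\) permutes \(\mathcal S_\Phi\). Those supports form the Steiner system \(S(3,4,8)\), i.e.\ the affine geometry \(\mathrm{AG}(3,2)\). So \(\pi\) lies in \(\mathrm{AGL}(3,2)\) of order \(1344\). For each such \(\pi\), the admissible signs form a coset of \(C\) of size \(16\), cut out by linear equations over \(\FF_2\) analogous to \eqref{eq:cayley-sign-constraints}. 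This gives \(|M|\le 1344\cdot16\), and the exact group is found by solving those equations.

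The basic geometric input is the fixed set of a linear \(g\in M\) on \(T^8\). Here \((T^8)^g\) is a closed subgroup whose identity component is a torus of dimension \(\dim\ker(g-1)\). Its Euler characteristic vanishes unless \(g\) has no eigenvalue \(1\). In that case it equals \(|\det(1-g)|\), by Lefschetz. For a commuting pair, \(\chi((T^8)^{g,h})\) is likewise nonzero only when \(\langle g,h\rangle\) has no common invariant vector. It then equals the order of the finite group \(\ZZ^8\cap\ker\) modulo the image, computable as the index \(|\operatorname{coker}(\ZZ^8\to\ZZ^8\oplus\ZZ^8,\ v\mapsto((1-g)v,(1-h)v))|\) in torsion form.

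For a \textbf{cyclic} \(H=\langle g\rangle\) of order \(n\), the formula gives
\begin{equation}
\chi_{\mathrm{orb}}=\frac1n\sum_{i,j}\chi\bigl((T^8)^{\langle g^{\gcd(i,j,n)}\rangle}\bigr)=\sum_{k\mid n}\frac{J_2(n/k)}{n}\,\chi\bigl((T^8)^{g^k}\bigr)\cdot(\text{multiplicity}),
\end{equation}
where \(J_2\) is Jordan's totient. We will show each nonzero term \(|\det(1-g^k)|\), with \(g^k\) fixed-point-free on \(\RR^8\), is divisible by a high power of \(2\). The argument runs as follows. A fixed-point-free signed monomial decomposes into signed cycles. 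A cycle of length \(\ell\) with sign product \(-1\) contributes \(\det=2\). So \(|\det(1-g^k)|=2^{c}\), where \(c\) is the number of cycles. The Cayley constraint bounds \(c\) from below relative to \(v_2(n)\), because signed cycle structures are restricted by the \(\mathrm{AGL}(3,2)\) classes. We then check that \(v_2\) of the sum exceeds \(v_2(n)\), class by class, over the finitely many conjugacy classes of \(M\).

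For \textbf{elementary abelian} subgroups of rank \(\le4\), and for \textbf{subgroups of order \(\le8\)}, we will enumerate up to \(M\)-conjugacy. The orbifold Euler characteristic is a conjugation invariant, so it suffices to evaluate \eqref{eq:orbifold-euler} on class representatives, e.g.\ using a GAP-style subgroup lattice of \(M\). Diagonal subgroups are already covered by Theorem~\ref{thm:diagonal-cayley-parity} with \(W=0\). The new cases involve involutions with nontrivial permutation part, which are products of a sign vector and a fixed-point-free or partial involution in \(\mathrm{AGL}(3,2)\). Among the order-\(8\) groups, the nonabelian ones (\(D_4\), \(Q_8\)) need only commuting pairs.

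The main obstacle is the lattice computation of \(\chi((T^8)^{g,h})\) when \(g,h\) mix permutations and signs. Then the fixed sublattice is not a product of coordinates, and the count of isolated points need not be \(2^8\), so it could be odd a priori. To handle it, we will first try to prove \(\chi((T^8)^{g,h})=2^{m}\) with \(m\ge 8-v_2\)-defect via a Smith-theory argument: \(\langle g,h\rangle\) is a \(2\)-group acting on a torus, so fixed-point counts are congruent to \(\chi(T^8)=0\) mod \(2\) along chains of subgroups. If that fails for some classes, we fall back on direct enumeration of the finitely many conjugacy classes, as the paper's finite checks do.
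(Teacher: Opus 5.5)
Your plan is correct, and for the elementary abelian and order-at-most-eight families it is the same finite computation the paper performs. In both, \(M\) consists of the \(16\) sign lifts of each of the \(1344\) permutations in \(\mathrm{AGL}(3,2)\), and common fixed sets come from exact lattice linear algebra; your torsion-of-cokernel formula is the right one. The paper closes generating tuples and checks all \(7{,}200\) cyclic subgroups, all elementary abelian subgroups of ranks one through four, and all \(31{,}134\) subgroups of order at most eight. Your reduction to \(M\)-conjugacy representatives is legitimate, since each element of \(M\) is an automorphism of \(T^8\), and it only shortens that list.

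The cyclic case is where you genuinely differ, and your route is stronger than you indicate: it closes by hand. It does not work for the reason you give, though. There is no Cayley lower bound on \(c\) in terms of \(v_2(n)\): type \(4^2\) has \(c(g)=2\) with \(n=8\), and the \(2\)-adic mass comes from \(J_2(n/k)\). Also drop the stray ``multiplicity'' factor, since \(J_2(n/k)\) already counts the pairs with \(\gcd(i,j,n)=k\).

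Here is how it closes. Let \(\pi\) have cycles of lengths \(\ell_i\) with sign products \(s_i\). The corresponding cycles of \(g^k\) have sign \(s_i^{k/\gcd(\ell_i,k)}\).
\begin{itemize}
\item If some \(s_i=+1\), no power of \(g\) avoids the eigenvalue \(1\), and \(\chi_{\mathrm{orb}}=0\).
\item If all \(s_i=-1\), then \(n=2\operatorname{lcm}(\ell_i)\). The power \(g^k\) has no eigenvalue \(1\) exactly when \(v_2(k)\le\min_i v_2(\ell_i)\), and then \(c(g^k)=\sum_i\gcd(\ell_i,k)\).
\end{itemize}
The only Cayley input is that \(\pi\) has one of the eight \(\mathrm{AGL}(3,2)\) cycle types \(1^8,\ 2^4,\ 1^4 2^2,\ 4^2,\ 1^2\,2\,4,\ 1^2 3^2,\ 2\,6,\ 1\,7\). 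You do not need to know which sign patterns actually occur in \(M\). For these types your \(J_2\) formula gives \(384,240,192,144,96,192,144,96\), all even.

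You should abandon the Smith-theory shortcut, for two reasons.
\begin{itemize}
\item It only shows that each nonzero summand is even. That is automatic anyway, because a finite fixed subgroup contains the \(2^{o}\) half-periods constant on the \(o\) coordinate orbits. Evenness of every summand says nothing after dividing by \(|H|\): eight independent reflections give summands all equal to \(2^8\), yet \(\chi_{\mathrm{orb}}=3^8\).
\item Several groups in the statement are not \(2\)-groups, namely \(C_3\), \(C_6\), \(C_7\), \(S_3\), and the cyclic groups of orders \(12\) and \(14\).
\end{itemize}
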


The signed-monomial census finds \(1344\) Cayley-support permutations and sixteen exact sign lifts of each, giving \(|M|=21{,}504\). The elementary abelian subgroup counts are
\begin{equation}
\begin{array}{c|rrrr}
\text{rank}&1&2&3&4\\ \hline
\#\text{ subgroups}&575&4{,}795&3{,}047&240.
\end{array}
\end{equation}
The complete cyclic census contains \(7{,}200\) subgroups, and every one has even unweighted orbifold Euler characteristic.
The complete order-at-most-eight census is
\begin{equation}
\begin{array}{c|r@{\qquad}c|r}
\text{type}&\#&\text{type}&\#\\ \hline
C_1&1&C_2&575\\
C_3&448&C_4&1{,}120\\
V_4&4{,}795&C_6&2{,}240\\
S_3&2{,}688&C_7&512\\
C_8&1{,}344&C_4\times C_2&2{,}940\\
C_2^3&3{,}047&D_8&10{,}780\\
Q_8&644&&
\end{array}
\end{equation}
Here \(D_8\) denotes the dihedral group of order eight. Every listed subgroup has even unweighted orbifold Euler characteristic.

The computation enumerates subgroups by closure from generating tuples, removes repeated subgroups by their exact element sets, and checks the resulting group orders and element-order distributions. It contains all \(7{,}200\) cyclic subgroups, all elementary abelian subgroups of ranks one through four, and all \(31{,}134\) subgroups of order at most eight. Common fixed sets are computed by exact integer and finite-field linear algebra.

\paragraph{Half-affine cyclic groups.}

Set
\begin{equation}
\mathcal A=T^8[2]\rtimes M,
\qquad
|\mathcal A|=5{,}505{,}024.
\end{equation}
\begin{proposition}[Half-affine cyclic parity]
\label{prop:half-affine-cyclic}
Every cyclic subgroup of \(\mathcal A\) has even unweighted orbifold Euler characteristic.
\end{proposition}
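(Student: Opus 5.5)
The plan is to reduce the statement to a finite, structured computation over cyclic subgroups \(\langle g\rangle\) of \(\mathcal A=T^8[2]\rtimes M\), where \(g=(b,A)\) acts by \(x\mapsto Ax+b/2\). For a cyclic group of order \(n\), every pair \((g^i,g^j)\) commutes, and the common fixed set of \(g^i,g^j\) is the fixed set of the subgroup \(\langle g^{\gcd(i,j,n)}\rangle\). Grouping pairs by the subgroup \(\langle g^i,g^j\rangle=\langle g^e\rangle\) with \(e\mid n\) gives
\begin{equation}
\chi_{\mathrm{orb}}(T^8,\langle g\rangle)=\frac1n\sum_{e\mid n}J_2(n/e)\,\chi\bigl((T^8)^{g^e}\bigr),
\end{equation}
where \(J_2\) is Jordan's totient, which counts generating pairs of a cyclic group. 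It therefore suffices to control \(\chi((T^8)^{h})\) for affine \(h=(c,B)\). Each component of \((T^8)^h\) is a translate of the subtorus \(\ker(1-B)\), so \(\chi((T^8)^h)=0\) unless \(B\) has no eigenvalue \(1\). In that case the Lefschetz fixed-point theorem gives \(\chi((T^8)^h)=\#\mathrm{Fix}(h)=|\det(1-B)|\), and this count is independent of the translation \(c\).

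The first consequence is a reduction to the linear part. The quantities \(\chi((T^8)^{g^e})\) depend only on the powers of the linear part \(A\). The only thing the translation can change is the order \(n\): the order of \(g\) is either \(\mathrm{ord}(A)\) or \(2\,\mathrm{ord}(A)\). The argument therefore splits into two cases. When \(n=\mathrm{ord}(A)\), the answer coincides with the pure-linear cyclic value, which is even by Proposition~\ref{prop:pure-linear-monomial}. When \(n=2m\) with \(m=\mathrm{ord}(A)\), the element \(g^m\) is a nonzero pure half-translation and has no fixed points. Only divisors \(e\) of \(n\) for which \(g^e\) has linear part \(A^e\) without eigenvalue \(1\) contribute. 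This yields a relation between the doubled-order sum and the linear sum, with the factor \(J_2\) for the doubled indices contributing extra powers of \(2\).

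The next step is to check evenness of this modified sum. The plan is to use that \(|\det(1-B)|\) for \(B\in M\) without eigenvalue \(1\) is divisible by a large power of \(2\) whenever \(B\) has a \(2\)-power order component. Eigenvalues \(-1\) contribute factors of \(2\), and primitive \(2^k\)-th roots contribute \(\Phi_{2^k}(1)=2\). The remaining factors \(\Phi_p(1)=p\) for odd \(p\) should be handled by the \(J_2\) weights. A clean \(2\)-adic valuation bound \(v_2\) comparing \(v_2(n)\) against \(v_2\) of each term would finish the argument uniformly.

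The main obstacle is that this valuation bound may fail for particular conjugacy classes. Examples are elements of order \(3\) or \(7\) times a sign, where \(\det(1-B)\) is odd, such as \(B=-A_3\) with \(A_3\) a fixed-point-free order-\(3\) element. The first attempt will be to show case by case that the Cayley condition forces such \(B\) to have enough \(-1\) eigenvalues. If that fails, the fallback is direct enumeration. \(M\) has \(7{,}200\) cyclic subgroups, and each lifts to at most \(2^8\) translation classes modulo conjugation by \(T^8[2]\). The orbifold Euler characteristic is then computed exactly with the formula above, using exact integer determinants and finite-field fixed-set checks, in the same way as the census behind Proposition~\ref{prop:pure-linear-monomial}.
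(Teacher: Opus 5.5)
Your reduction is correct. The Jordan-totient formula for a cyclic group is right, and \(\chi((T^8)^h)\) for \(h=(c,B)\) equals \(|\det(1-B)|\) or \(0\) independently of \(c\). So when \(\operatorname{ord}(g)=\operatorname{ord}(A)\), the affine value equals \(\chi_{\mathrm{orb}}(T^8,\langle A\rangle)\), which is even by Proposition~\ref{prop:pure-linear-monomial}.

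The gap is the doubled case \(\operatorname{ord}(g)=2m\), which you leave open. The \(2\)-adic bound is only hoped for, and the obstacle you anticipate does not occur. Since \(M\subset O(8,\ZZ)\) consists of signed permutation matrices, an element of odd order has sign product \(+1\) on each of its cycles and therefore fixes a nonzero vector. In particular, \(M\) has no fixed-point-free element of order \(3\) or \(7\).

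The missing observation is that the doubled case contributes exactly zero. Every even-order subgroup of the cyclic group \(\langle g\rangle\) contains its unique involution \(g^m\). This is a nonzero half-period translation acting freely, so every pair generating such a subgroup has empty common fixed set. Every odd-order subgroup has linear part of odd order, hence an eigenvalue \(1\), hence Euler characteristic zero.

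You can also see this without the monomial structure. Suppose some \(A^e\) with \(e\mid m\) had no eigenvalue \(1\). Then \((1-A^e)N_{A^e}=1-A^m=0\) would force \(N_{A^e}=0\) over \(\ZZ\), hence \(N_A=N_{A^e}(1+A+\cdots+A^{e-1})=0\), and the order could not double. Every power of \(A\) generates the same group as such an \(A^e\), so this covers all terms.

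So \(\chi_{\mathrm{orb}}(T^8,\langle g\rangle)=0\) whenever the order doubles. With this, your argument proves the proposition as a corollary of the \(7{,}200\)-subgroup linear census, with no valuation estimate and no affine enumeration. Your enumeration fallback would also close the case, but it is unnecessary.

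This is a genuinely different route from the paper. The paper enumerates all \(1{,}633{,}536\) affine cyclic subgroups via the rank of \(N_A\), tabulates their Euler values, and separately checks all \(256\) shifts per norm profile. Your completed route explains why that shift check must succeed and why doubling only produces zeros. The paper's census additionally records the full distribution of Euler values and their multiplicities.
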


For an element \((A,u)\) with linear part of order \(m\),
\begin{equation}
(A,u)^m=(1,N_Au),
\qquad
N_A=1+A+\cdots+A^{m-1}
\end{equation}
over \(\FF_2\). The rank and kernel of \(N_A\) determine whether the affine element has order \(m\) or \(2m\). Dividing the resulting element counts by Euler's totient \(\varphi(n)\) gives the number of cyclic subgroups of order \(n\).

The norm-rank census for \(\mathcal A\) gives the cyclic subgroup distribution
\begin{equation}
\begin{array}{c|rrrrrrrrr}
|H|&1&2&3&4&6&7&8&12&14\\ \hline
\#H&
1&29{,}183&7{,}168&216{,}832&508{,}928&
32{,}768&408{,}576&200{,}704&229{,}376.
\end{array}
\end{equation}
The Euler distribution is
\begin{equation}
\begin{array}{c|rrrrrr}
\chi_{\mathrm{orb}}
&0&96&144&192&240&384\\ \hline
\#H
&1{,}018{,}368&174{,}080&286{,}720&
125{,}440&28{,}672&256.
\end{array}
\end{equation}
The counts sum to \(1{,}633{,}536\). An offset-sensitive affine signed-graph check covers all \(256\) shifts for a representative of every norm profile.

The pure-linear proposition covers exactly the subgroup families stated there, while the half-affine proposition covers cyclic subgroups only. Neither result classifies noncyclic affine subgroups, nonmonomial lattices, other translation denominators, weighted orbifold Euler characteristics, or smooth resolutions.

\paragraph{Worldsheet checks.}
\label{app:worldsheet-checks}

The representation-ring relation, the signed anomaly moments, and the eleven boundary-sign profiles used in \cref{subsec:joyce-finite-spin} are verified by the same exact evaluation of the \(256\) flat backgrounds. Further checks reproduce the even-spin genus \eqref{eq:joyce-even-spin-genus} and its \(q\)-expansion, verify the theta-subgroup conjugacy, and confirm the central charges, conformal weights, modular data, inverse twists, and character identity of the diagonal \(E_{8,1}\) extension in \cref{app:joyce-worldsheet-comparisons}.

\section{Worldsheet constructions at the Joyce point}
\label{app:joyce-worldsheet-comparisons}

The standard Joyce action also appears in two exact worldsheet constructions that are distinct from the standalone chiral CFT in \cref{subsec:joyce-canonical-orbifold}. We record their full character formulas and protected genera here.

\paragraph{The heterotic conformal embedding.}

Sugiyama and Yamaguchi constructed a full heterotic, GSO-summed genus-one orbifold for the same Joyce action \cite{Sugiyama:2001qh}. Their sector blocks have the form
\begin{equation}
Z^{\mathrm{het}}_{g,h}
=
Z^B_{g,h}Z^{\mathrm{RNS}}_{g,h}
\overline{
\chi^{E_8^{(1)}}_{g,h}\chi^{E_8^{(2)}}_0
},
\end{equation}
and the assembled sum is modular invariant. This is an exact genus-one heterotic construction built from the same affine \(\mathfrak{so}(7)_1\) sector. It does not by itself give the standalone seven-fermion orbifold in \eqref{eq:joyce-canonical-orbifold}.

For the conformal embedding in \eqref{eq:joyce-heterotic-conformal-embedding}, the vacuum character branches as
\begin{equation}
\boxed{
\chi^{E_8}_0
=
\chi^7_0\chi^9_0
+\chi^7_v\chi^9_v
+\chi^7_s\chi^9_s.
}
\label{eq:e8-diagonal-branching}
\end{equation}
The three branches transform as
\begin{equation}
(\mathbf1,\mathbf1),
\qquad
(\mathbf7,\mathbf9),
\qquad
(\mathbf8,\mathbf{16}),
\end{equation}
with paired conformal weights
\begin{equation}
(0,0),
\qquad
\left(\frac12,\frac12\right),
\qquad
\left(\frac7{16},\frac9{16}\right).
\end{equation}
Every branch therefore has integral total conformal weight. The central charges also add to eight.
\begin{equation}
c\bigl(\mathfrak{so}(7)_1\bigr)
+c\bigl(\mathfrak{so}(9)_1\bigr)
=\frac72+\frac92=8.
\end{equation}

The diagonal pairing prevents the physical Hilbert space from factorizing into seven-fermion and nine-fermion sectors. In particular, the spinor weights \(7/16\) and \(9/16\) add to one and define a local sector only together. The modular-invariant heterotic partition function therefore cannot be divided by an \(\mathfrak{so}(9)_1\) character to recover the standalone seven-fermion theory.

At the affine-character level, the same extension appears in
\begin{equation}
O_7O_9+V_7V_9+S_7S_9
=
\frac12
\left[
\left(\frac{\theta_3}{\eta_{\mathrm{Ded}}}\right)^8
+\left(\frac{\theta_4}{\eta_{\mathrm{Ded}}}\right)^8
+\left(\frac{\theta_2}{\eta_{\mathrm{Ded}}}\right)^8
\right]
=\chi^{E_8}_0.
\end{equation}
The categorical and theta-function checks are collected in \cref{app:worldsheet-checks}. An explicit standalone construction would instead require its own spin-structure-resolved twisted amplitudes and sewing data.

\paragraph{The ordinary \texorpdfstring{\(\mathcal N=(1,1)\)}{N=(1,1)} orbifold.}

The familiar \(\mathcal N=(1,1)\) Joyce orbifold answers a different protected question. Both chiralities use the tangent representation, so it has
\begin{equation}
c_L=c_R=12
\end{equation}
and no gravitational anomaly. Its fully periodic protected index is known exactly \cite{ShatashviliVafa1994,BenjaminEtAl2014}.
\begin{equation}
I_{\mathrm{RR}}=\chi_{\mathrm{orb}}(T^8,\Gamma)=144.
\end{equation}
The even-spin protected genus contains more information \cite{ShatashviliVafa1994,BenjaminEtAl2014}.
\begin{equation}
Z_{\mathrm{NS},+}(\tau)
=
\operatorname{Tr}_{\mathrm{NS},\mathrm{R}}
(-1)^{F_R}
q^{L_0-c/24}
\bar q^{\bar L_0-c/24}.
\end{equation}
For a compact full-holonomy Spin(7) target with \(\widehat A=1\),
\begin{equation}
Z_{\mathrm{NS},+}(\tau)
=
K(\tau)+8-\frac{\chi}{3},
\end{equation}
where
\begin{equation}
K(\tau)
=
\frac{\Delta(\tau)^2}{\Delta(2\tau)\Delta(\tau/2)}
-24
=q^{-1/2}+276q^{1/2}+2048q+\cdots.
\end{equation}
For the first Joyce example,
\begin{equation}
\boxed{
Z_{\mathrm{NS},+}=K-40.
}
\label{eq:joyce-even-spin-genus}
\end{equation}
At the orbifold point this follows from
\begin{equation}
16\left(
\frac{\theta_2^4}{\theta_4^4}
+\frac{\theta_4^4}{\theta_2^4}
-2
\right)
=K-40.
\end{equation}

The even-spin genus transforms under
\begin{equation}
\Gamma_\theta=\langle S,T^2\rangle,
\end{equation}
which is conjugate to \(\Gamma_0(2)\). This modularity is associated with the level-two spectrum \(\Tmf_0(2)\) over \(\ZZ[1/2]\) \cite{HillLawson2013}. The protected genus alone does not define a homotopy class in that spectrum.

The exact free-field presentation applies at the singular flat orbifold point. A smooth Joyce resolution generally requires an \(\alpha'\)-corrected background, although supersymmetry can persist after those corrections are included \cite{BeckerRobbinsWitten2014}. The corresponding smooth target has even Euler characteristic and hence vanishing geometric Smith charge. These constructions establish the neighboring exact worldsheet theories, but neither computes a protected index of the standalone chiral CFT.

\section{Suspension symbol and the \(\eta\)-factor}
\label{app:suspension-eta}

In the proof of Proposition~\ref{prop:odd-rank-abs}, let \(E\to X\) be a rank-seven
spin bundle and \(W=E\oplus\RR\). Let
\[
a_W\in KO_c^0(W),\qquad
U_W:=\beta^{-1}a_W\in KO_c^8(W)
\]
be, respectively, the stabilized rank-eight ABS symbol and its normalized Thom class.

Let \(u_1\in KO_c^1(\RR)\cong \widetilde{KO}^1(S^1)\cong KO^0(\mathrm{pt})\cong \ZZ\)
be the Thom class of the positively oriented spin line.
Let
\[
\delta_t:=\bigl[\RR\to\RR,\, t\,\mathrm{id},\,\RR\bigr]\in KO_c^0(\RR)
\cong \widetilde{KO}^0(S^1)\cong KO^{-1}(\mathrm{pt})\cong \ZZ/2
\]
be the compactly supported degree-zero suspension symbol.
After one-point compactifying \(\RR\to S^1\), \(\delta_t\) has normalized values
\(-1\) and \(+1\) at the two ends, so its clutching class is the Möbius line
minus the trivial line. Hence \(\delta_t\) is the nontrivial element of
\(\widetilde{KO}(S^1)\), and therefore
\begin{equation}
\delta_t=\eta\,u_1.
\label{eq:eta-delta-relation}
\end{equation}

Consider the inclusion \(j:\RR\to W\), \(j(t)=(0,t)\). Thom multiplicativity
implies
\begin{equation}
j^*U_W=e_{KO}(E)\boxtimes u_1.
\label{eq:susp-thom-mult}
\end{equation}
On the other hand, using compatible \(\Spin(7)\subset\Spin(8)\) spinor choices,
the \(W\)-symbol \(a_W\) restricted to \(\RR\) is exactly the product of the
zero-mode bundle with \(\delta_t\):
\begin{equation}
j^*a_W=[S(E)]\boxtimes\delta_t.
\label{eq:susp-restriction-a}
\end{equation}
Multiplying \eqref{eq:susp-restriction-a} by \(\beta^{-1}\) and using
\eqref{eq:eta-delta-relation} gives
\begin{equation}
j^*U_W=\beta^{-1}[S(E)]\boxtimes\delta_t
\;=\;\eta\,\beta^{-1}[S(E)]\boxtimes u_1.
\label{eq:susp-restriction-U}
\end{equation}
Comparing \eqref{eq:susp-thom-mult} with \eqref{eq:susp-restriction-U} and
canceling the \(u_1\)-factor by inverse Thom isomorphism on the auxiliary line
yields
\begin{equation}
e_{KO}(E)=\eta\,\beta^{-1}[S(E)].
\label{eq:susp-result}
\end{equation}
Because \(S(E)\) has real rank eight and \(8\eta=0\), this is equivalent to
\[
e_{KO}(E)=\eta\,\beta^{-1}\bigl([S(E)]-8\bigr).
\]

\printbibliography[heading=bibliography]

\end{document}